\keywords{Batch Exchanges, Automated Market Makers, Market Equilibria}
\newenvironment{hproof}{%
  \proof}{\endproof}
\newlist{pfparts}{description}{1}
\setlist[pfparts,1]{%
  font=\normalfont\textsf,
  itemindent=2pt,
  wide,
  itemsep=0pt,topsep=2pt,
  labelsep=0.75ex
}
\newcommand{\ashish}[1]{{\color{blue} [AG: {#1}]}}
\newcommand{\mohak}[1]{{\color{red} [MG: {#1}]}}
\pgfplotsset{compat=1.9}
\title{Augmenting Batch Exchanges with Constant Function Market Makers
}
\newcommand{\geoffemail}{geoff.ramseyer@cs.stanford.edu}
\newcommand{\mohakemail}{mohakg@stanford.edu}
\newcommand{\ashishemail}{ashishg@stanford.edu}
\newcommand{\dmemail}{}
\author{Geoffrey Ramseyer*}
\affiliation{%
  \institution{Stanford University}
  \city{Stanford}
  \state{California}
  \postcode{94305}
  \country{USA}}
\email{\geoffemail}
\author{Mohak Goyal*}
\affiliation{%
  \institution{Stanford University}
  \city{Stanford}
  \state{California}
  \postcode{94305}
  \country{USA}
}
\email{\mohakemail}
\author{Ashish Goel}
\affiliation{%
  \institution{Stanford University}
  \city{Stanford}
  \state{California}
  \postcode{94305}
  \country{USA}}
\email{\ashishemail}
\author{David Mazières}
\affiliation{%
  \institution{Stanford University}
  \city{Stanford}
  \state{California}
  \postcode{94305}
  \country{USA}}
\email{\dmemail}
\thanks{* denotes equal contribution.}
\newcommand\asset[1]{\begingroup #1 \endgroup}
\newcommand\price[1]{\begingroup {p_{\asset{#1}}} \endgroup}
\newcommand{\assetset}{\begingroup \mathcal{A} \endgroup}
\newcommand{\agentset}{\mathcal{I}}
\newcommand{\cvxcfmmset}{\mathcal{J}}
\newcommand{\cvxloset}{\mathcal{K}}
\newcommand{\cvxboth}{\cvxcfmmset \cup \cvxloset}
\newtheorem{theorem}{Theorem}[section]
\newtheorem{corollary}[theorem]{Corollary}
\newtheorem{lemma}[theorem]{Lemma}
\newtheorem*{lemma*}{Lemma}
\newtheorem{example}[theorem]{Example}
\newtheorem*{example*}{Example}
\newtheorem{definition}[theorem]{Definition}
\newtheorem*{theorem*}{Theorem}
\newtheorem{observation}{Observation}
\newtheorem{assumption}{Assumption}
\newtheorem*{proposition*}{Proposition}
\newtheorem{axiom}{Axiom}
\newtheorem{claim}{Claim}
\begin{abstract}


Batch auctions are a classical market microstructure, acclaimed for their fairness properties, and have received renewed interest in the context of blockchain-based financial systems. Constant function market makers (CFMMs) are another market design innovation praised for their computational simplicity. 
Liquidity provision in batch exchanges is an important problem, and CFMMs have recently shown promise in being useful within batch exchanges. Different real-world implementations have used fundamentally different approaches towards integrating CFMMs in batch exchanges, and there is a lack of formal understanding of the trade-offs of different design choices.

We first provide a minimal set of axioms that capture the well-accepted rules of batch exchanges and CFMMs. For batch exchanges, these are asset conservation, uniform prices, and a best response for limit orders. For CFMMs, our axiom is that their trading function is non-decreasing. Many market solutions may satisfy all our axioms. We then describe several economically desirable properties of market solutions. These include Pareto optimality for limit orders, price coherence of CFMMs (as a defence against cyclic arbitrage), joint price discovery for CFMMs (as a defence against parallel running), path independence, and a locally computable trade response of the CFMMs (to provide them with a predictable trade size given a market price). For market solutions satisfying all our axioms, we show fundamental conflicts between some pairs of desirable properties. Most notably, a batch exchange cannot simultaneously guarantee `Pareto optimality' for the limit orders and any of `price coherence' or `locally computable response' for the CFMMs. We then provide two ways of integrating CFMMs in batch exchanges, which attain different subsets of these properties. 

We further provide a convex program for computing Arrow-Debreu exchange market equilibria when all agents have weak gross substitute (WGS) demand functions on two assets -- this program extends the literature on Arrow-Debreu exchange markets and may be of independent interest.

\end{abstract}
\begin{document}

\begin{titlepage}

\maketitle

\end{titlepage}

\section{Introduction}

A crucial component of any economic system is a structure to facilitate the exchange
of assets. 
%
Batch auctions (sometimes referred to as call auctions or call markets) are a market mechanism that accumulates trade orders over time.
At some frequency,
the exchange operator computes a uniform clearing price (the ``batch price'') and settles all trades that are possible at that price. 
Batch auctions have been a prominent market mechanism in academic literature, with models suggesting that they can lead to better price discovery and reduce intermediation costs by enabling
traders to trade with each other directly at the same time\cite{madhavan1992trading, cohen2001electronic, economides2001electronic, schwartz2012electronic}. 

There has been renewed interest in batch auctions following the work of \citet{budish2015high}, who propose using batch auctions to address the problem of competition on speed rather than on price in continuous double auctions (CDA).
Making every trade in a batch at the same price eliminates a large fraction of front-running\footnote{Front-running is the practice of making 
a trade based on advance knowledge of an upcoming order. For example, a trader can front-run a buy order for an asset by buying some of that asset, driving up its price,
and then reselling the asset to the buy order at the higher price. Such practices
are partly curtailed by regulation in several markets but are still observed in stock trading \cite{manahov2016front} and are rampant in blockchain-based exchanges \cite{daian2020flash}.}
 opportunities \cite{budish2015high}. 
Critics of batch auctions argue that they increase price uncertainty and reduce market liquidity, as liquidity providers who gain a ``speed tax'' in CDAs have no incentive to participate in batch auctions
\cite{mizuta2016investigation, lee2020frequent, dorre2020}.

While the applicability of batch auctions to traditional exchanges is still the subject of debate and regulatory considerations,
batch exchanges are especially attractive for cryptocurrencies since blockchains inherently register trades in discrete-time `blocks'. Such systems 
have already been deployed \cite{cowswap,zswap}. 
 Some batch exchanges, e.g., \cite{ramseyer2021speedex,cowswap}, process a large number of assets in one batch, instead of just two assets, by computing in every batch
a set of arbitrage-free prices between every asset pair. 
This reduces the problem of liquidity fragmentation, which is especially difficult in modern
blockchains \cite{lehar2022liquidity}. 
Furthermore, this allows users to directly trade from any asset to any other without holding some intermediate asset (such as USD), unlike the exchanges that facilitate trades only in pairs of assets.
However, the computation of market equilibrium is substantially more difficult when many assets are traded in a batch rather than just two.

Another recently developed tool for improving exchange performance on blockchains are Constant Function Market Makers, a class of automated market-makers. 
 Liquidity providers deposit capital into a CFMM, and the CFMM constantly offers trades according to a predefined trading strategy.
This strategy is specified by a ``trading function'' $f(x)$ of its asset reserves $x$ (henceforth its ``state''),
and the CFMM accepts a trade if the trading function's value does not decrease on making the trade. 
We describe CFMMs in detail in \S\ref{sec:cfmm}.
CFMM-based decentralized exchanges (DEX) such as Uniswap \cite{uniswapv2,uniswapv3} and Curve \cite{egorov2019stableswap}
are some of the largest on-chain trading platforms.


Since CFMMs are widely used in practice as relatively simple yet effective liquidity provision tools in DeFi, we investigate the possibility of designing batch exchanges which support CFMMs for liquidity.
Previous works have chosen fundamentally different methods
for mediating the interaction \cite{cowswap,canidio2023batching}, and it continues to be a problem of great interest for practitioners. We study the trade-offs between different desirable properties of batch exchanges that support CFMMs. Importantly, we show that many natural desirable properties are not simultaneously satisfiable, and therefore, we study the trade-offs of different mechanisms. 

\subsection{Preliminaries}
%
%
%
\subsubsection{Constant Function Market Makers} \label{sec:cfmm}

A CFMM is a market-making strategy
parameterized by a trading function $f: \mathbb{R}^n_{\geq0} \rightarrow \mathbb{R}_{>0},$ where $n$ is the number of assets it trades in. 
At any time,
the CFMM owns non-negative amounts of some assets (its ``reserves'') provided by deposits from investors participating in liquidity provision (so-called ``liquidity providers'').
A CFMM with reserves $x$ and function $f$ accepts any trade that results in reserves $z$ 
so long as $f(z) \geq f(x)$.  

\begin{assumption}
\label{ass:cfmm}
 CFMM trading functions are strictly quasi-concave, differentiable, nonnegative, and nondecreasing (in every coordinate)  on the positive orthant.
  \footnote{We can relax strict quasi-concavity to quasi-concavity and differentiability to continuity for many of our results.
However, we use the strong form of the assumption in \S\ref{sec:axioms} for ease of exposition.}
\end{assumption}

This assumption is standard in the literature and is important for the CFMM to be not \textit{obviously} exploitable \cite{angeris2020improved,schlegel2022axioms}. Quasi-concavity ensures that the prices are monotonic in the asset amounts.

%
%
%
The gradient of the trading function gives the price for a trade of infinitesimal size.



\begin{definition}[Spot Price]
\label{defn:spot}
The {\it spot price} from asset $\asset{A}$ to asset $\asset{B}$ for a CFMM with trading function $f$ at state $\hat{x}$ is 
$\frac{\partial f}{\partial x_A}(\hat{x})/\frac{\partial f}{\partial x_B}(\hat{x})$.  

\end{definition}


\begin{figure}
\begin{center}
  \begin{minipage}[c]{0.38\textwidth}
\includegraphics[width=\textwidth]{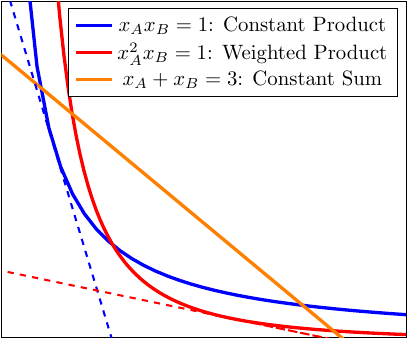}
  \end{minipage}\hfill
  \begin{minipage}[c]{0.6\textwidth}
    \caption{
    Examples of level sets of commonly studied CFMM trading functions. X-axis and Y-axis denote the amounts of assets A and B in the liquidity pool of the CFMM, respectively. Examples of spot prices are illustrated by dashed lines. Where the trading function is differentiable, the negative slope of the tangent at a point gives the spot price (Definition~\ref{defn:spot}).}
        \label{fig:cfmm_examples}
  \end{minipage}
\end{center}
\end{figure}

We illustrate some examples of widely used CFMM trading functions in Figure~\ref{fig:cfmm_examples} and give more examples in Appendix~\ref{app:cfmm_examples}.


\subsubsection{Arrow Debreu Exchange Markets}
\label{sec:AD}

An Arrow Debreu market \cite{arrow1954existence} is used to model a pure exchange market
, i.e., a market where a set of assets $\assetset$ are traded without a designated numeraire or ``money''. Assets are fungible, divisible and freely disposable. Agents have quasi-concave and non-satiating utilities for the bundle of assets they consume and have an initial endowment of assets. All trades happen at the same prices $\lbrace \price{A} >0\rbrace_{\asset{A}\in\assetset}$, and asset amounts remain conserved.

  The following are examples of order types that a batch exchange may support.%

\begin{definition}[Limit Sell Order]
\label{ex:limitsell}

A limit sell order $ (\asset{A} ,\asset{B} ,k \in \mathbb{R}_{> 0},r\in \mathbb{R}_{\geq 0})$, 
is an order to sell up to $k$ units of asset $\asset{A}$ for as many units of asset $B$ as possible,
subject to receiving a minimum price (the ``limit price'') of $r$ $\asset{B}$ per $\asset{A}$.

\end{definition}

As a convention in the literature, limit sell orders are thought of as corresponding to the utility function: $u(x) = r x_A + x_B$. We also adopt this convention in this paper. Observe that the bundle received by the limit order in a batch maximizes this utility function.

\begin{definition}[Market Sell Order]
Limit sell orders with limit price $0$ are \textit{market sell} orders.
\end{definition}


\begin{definition}[Limit Buy Order]
\label{ex:limitbuy}
A limit buy order is an order to purchase up to $k$ units of asset $A$ by selling as few units of asset $\asset{B}$ as possible,
subject to a maximum price of $r$ $\asset{B}$ per unit $\asset{A}$.
\end{definition}

Limit buy order can be seen as corresponding to the utility function: $u(x) = r x_{\asset{A}} + \min(k, x_{\asset{B}})$.



\subsection{System Model} \label{sec:model}
In this work, we design a batch trading system where multiple CFMMs can interoperate and provide liquidity to limit sell orders. 
The design specifications of our model are as follows:
\begin{enumerate}[leftmargin = 18pt]
\item A limit sell order\footnote{Limit buy orders correspond to additively separable, piecewise-linear concave utility functions and lead to PPAD-hardness of equilibrium computation, as shown by Chen et al. \cite{chen2009settling} -- therefore, we do not support it. This, however, does not make any significant restriction since a trader who wants to buy A in exchange for B can instead place an order to sell B for A.} can be submitted or removed any time before a cutoff for each next batch.
    \item A CFMM participating in the batch exchange must submit its ``state and trading function'' to the exchange before a cutoff time for each next batch.
    \item Between consecutive batches, the CFMMs may also be available for their standalone operation, but need to be unchanged after submitting their state and trading function to the batch exchange till the batch is executed.
\end{enumerate}


We do not model the fee that the batch exchange operator and the CFMM charge. 
While fees are essential to incentivize the liquidity providers to participate, we believe that our axiomatic framework is important also in the presence of trading fees.

\subsection{Our Results}

To our knowledge, this is the first paper to do an axiomatic and computational study of the important problem of augmenting batch exchanges with CFMMs. We first give a minimal set of axioms for batch exchanges incorporating CFMMs. The details are in \S\ref{sec:axioms}.


\subsubsection{Axioms}~

  We require that a batch exchange neither burn nor mint any asset -- \textit{asset amounts must be conserved} (Axiom~\ref{axiom:conservation}). We also impose the core fairness property of batch exchanges that asset prices exist in equilibrium, and no market participant receives a better trade than that implied by these prices (Axiom~\ref{axiom:batch_valuations}). 
  Further, all limit orders receive a trade which is a ``best response'' to the batch prices (Axiom~\ref{axiom:br}). 
The axiom about CFMMs is their basic design principle that their trading function should not decrease due to trade (Axiom~\ref{axiom:cfmm_nondecreasing}). 

Axioms~\ref{axiom:conservation},~\ref{axiom:batch_valuations}, and~\ref{axiom:br} are simply an articulation of the core properties of batch exchanges which we believe are well-accepted rules for designing exchange markets. Axiom~\ref{axiom:cfmm_nondecreasing} is a basic guarantee required by all CFMM deployments. These axioms are \textit{minimal} and do not impose a particular solution. They allow a rich set of solution concepts with different economically useful properties, which is this paper's core subject of study. 

We define a \textit{market equilibrium} (Definition~\ref{def:market_eq}) as a solution which satisfies Axioms~\ref{axiom:conservation},~\ref{axiom:batch_valuations},~\ref{axiom:br}, and~\ref{axiom:cfmm_nondecreasing}.
%
%
At least one market equilibrium always exists, for example, where the CFMMs do not make any trade and the set of limit sell orders trade as per a standard Arrow Debreu market.


\subsubsection{Desirable Properties of Batch Exchange Market Equilibria} \label{sec:des}
We now define some desirable properties of market equilibria. These properties then guide us in designing algorithms for finding economically useful market equilibria. 
\begin{enumerate}[leftmargin = 18pt]
    \item 
 The first property is \emph{Pareto optimality} for the limit sell orders.

\item 
Another property is \emph{price coherence} (PC) of a group of CFMMs post-batch (Definition~\ref{def:pc}). 
PC is a necessary and sufficient condition for the participating CFMMs to be in a \emph{cyclic arbitrage-free} state (Definition~\ref{def:ca}) after the batch executes. 


A weaker condition than PC is \emph{preservation of price coherence} (PPC), under which a group of participating CFMMs must be price coherent post-batch \textit{if} they were price coherent pre-batch.

 \item 
We also consider \emph{joint price discovery} (JPD) (Definition~\ref{def:jpd}) -- a property strictly stronger than PC. JPD ensures that the post-batch spot prices are the same as the batch prices (that is, the CFMM's `learn' the prices `discovered' in the batch exchange they participate in). 
JPD eradicates a form of arbitrage we call \textit{parallel running} (Definition~\ref{def:pr}). 
 \item 
Another property is \textit{locally computable response} (LCR) for the CFMMs (Definition~\ref{def:lcr}). In LCR, the trade made by a CFMM is a deterministic function of only its trading function, pre-batch state, and batch prices. LCR provides predictability to the liquidity providers and can help them do a better risk-profit analysis before committing to a market-making strategy.
\end{enumerate}

We also discuss a property -- \textit{path independence} (Definition~\ref{def:pi}) in Appendix \S~\ref{sec:pi} motivated from the standalone operation of CFMMs for batches with a single limit order.
\subsubsection{Achievability of Desirable Properties of Batch Exchange Market Equilibria} \label{sec:des-eq}
We start with two key impossibility results regarding the desirable properties of market solutions.
\begin{theorem} \label{thm:pc-po}
A batch exchange cannot simultaneously guarantee Pareto optimality for limit orders and preservation of price coherence (PPC) for CFMMs.
\end{theorem}

\begin{theorem} \label{thm:lcr-po}
A batch exchange cannot simultaneously guarantee Pareto optimality for limit orders and a locally computable response (LCR) for CFMMs.
\end{theorem}

Both proofs are via carefully designed counter-examples and are given in \S~\ref{sec:desirables}.

Recall that PPC is weaker than PC, which in turn is weaker than JPD, therefore Theorem~\ref{thm:pc-po} also applies to PC and JPD. 
Also, recall that Pareto optimality is defined for the limit orders, whereas PPC and LCR are intended to protect the CFMMs from arbitrage and trade uncertainty. These results shed light on an inherent tension between the interests of the CFMM and those of the limit orders. They also illustrate that the problem we study in this paper is non-trivial and nuanced.

We now turn to designing algorithms that find market equilibria with some of the desirable properties given above. Specifically we mainly study two natural algorithms.

\begin{enumerate}
    \item Once can view a CFMM's trading function as a pseudo-utility function and give each CFMM a bundle which maximizes this pseudo-utility. We call this ``Trading Rule U'' (U for utility) (Definition~\ref{def:tru}), and discuss it in more detail in \S~\ref{sec:pos}.
    \item The second approach is applicable only when each CFMM trades in two assets, which is the most important case in theory and practice. Here we maximize the CFMM's price-weighted trade volume at the batch prices. We call this "Trading Rule S" (S for strict, since the CFMM's trading function is strictly preserved under this rule) (Definition~\ref{def:trs}).
\end{enumerate}

Observe that both Trading Rules U and S satisfy LCR. These simple rules have further interesting economic properties, which we briefly mention here and discuss in detail in \S\ref{sec:pos}.

\begin{proposition}\label{prop:u-jpd}
    A batch exchange has JPD if and only if it uses Trading Rule U for all CFMMs.  
\end{proposition}

Trading Rule S, in general, does not satisfy PPC. However, this negative result is bypassed in batches with only ``Concentrated Liquidity Constant Product'' (CLCP) CFMMs (Definition~\ref{def:clcp}), which is a class including the constant-sum and constant-product CFMMs.

\begin{theorem} \label{thm:univ3-ppc}
CLCP is the unique class of CFMMs such that if all CFMMs in a batch belong to this class, then the market equilibria attained by implementing Trading Rule S for all CFMMs satisfy PPC.  
\end{theorem}
This result uncovers a very interesting and useful fact about the CLCP class of CFMMs, which is used extensively in practice and theory. While CLCP trading functions are hailed for their computational simplicity and universality \cite{uniswapv3}, we establish the surprising fact here that they are also particularly well-suited for integration with batch exchanges while using a very natural trading rule, which maximizes their trade volume.

\begin{figure}
\begin{center}
  \begin{minipage}[c]{0.5\textwidth}
\includegraphics[width=0.95\textwidth]{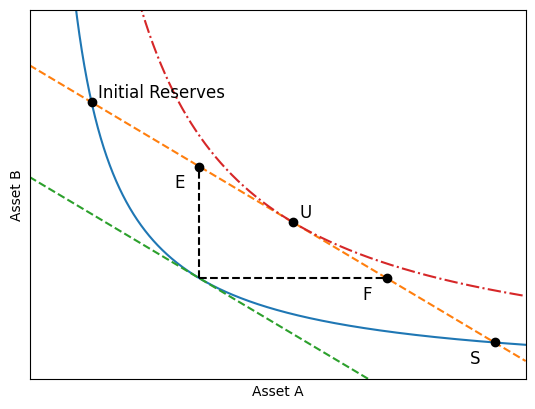}
  \end{minipage}\hfill
  \begin{minipage}[c]{0.5\textwidth}
    \caption{ 
   Examples of LCR CFMM trading rules. The axes are the CFMM's reserves. The blue curve is a level curve of the CFMM trading function on which the initial state lies.
    The slope of the tangent to the level curve denotes the CFMM's spot price.
    The slope of the dotted orange line
    is the batch price for this example. Here, the green and the orange lines have the same slopes. \\
    Trading Rules S and U are per Definitions~\ref{def:trs} and~\ref{def:tru}.
    Trading Rules E and F are inspired by the rebalancing strategy of \citet{milionis2022automated} and 
    are defined later in \S\ref{sec:beyond-market}. The line segment between points E and F corresponds to the class of Strict-Surplus Trading Rules (Definition~\ref{def:sstr}).}
        \label{fig:rules}
  \end{minipage}
\end{center}
\end{figure}

\subsubsection{Results on Equilibrium Computation}

 Convex programs for computing market equilibria have led to an improved structural
understanding of markets (such as the much-celebrated result of \citet{eisenberg1959consensus}).
We develop a convex program (in \S \ref{sec:cvx}) for computing equilibria for markets with limit sell orders and 2-asset CFMMs that have a WGS demand under any given LCR CFMM trading rule. This program may be of independent interest -- it makes progress on the open question of designing convex programs for nonlinear utility functions in Arrow Debreu exchange markets. Our program 
handles the case where agent utility functions are arbitrary quasi-concave functions of two assets, subject to the constraint that each agent's demand function is WGS.

Our convex program is inspired by that of \citet{devanur2016rational} for the case of linear utility functions.
The proof is quite technical, but the intuition is easy to state -- we develop a viewpoint from which CFMMs appear as an uncountable collection of infinitesimal limit sell offers.
Unlike the analyses in \citet{devanur2016rational} based on Lagrange multipliers, we have to go back to earlier techniques
and directly apply Kakutani's fixed point theorem.

When the density of this infinite collection of limit offers is a rational linear function, we prove that our convex program has rational solutions. This includes many classes of commonly used CFMMs, including the constant product CFMM. 
On the flip side, we show that 
some CFMMs, for example, those based on the Logarithmic Market Scoring Rule (LMSR) \cite{hanson2007logarithmic}, can force batches to admit only irrational equilibria.

\subsubsection{Solution Concepts Beyond Market Equilibria}


A natural question is whether we can attain more of the desirable properties of \S\ref{sec:des} simultaneously if we allow the CFMMs to do a post-processing step after they trade in the batch. The answer is yes, but with a caveat.
If the CFMMs have the option to alter their state post-batch, then they can together attain a state of price coherence, but this will have to violate the axiom of uniform prices (Axiom~\ref{axiom:batch_valuations}). Further, we give a class of LCR trading rules -- Strict-Surplus Trading Rules (Definition~\ref{def:sstr}) -- on 2-asset CFMMs with the following property:

\textit{In any market equilibrium where all CFMMs' trades are per a Strict-Surplus Trading Rule, each CFMM has a surplus-capturing post-processing step, upon which they together attain JPD} (Theorem~\ref{thm:brf-postprocess}). 

This result opens up a new dimension to the solution space for our problem. 
Such multiple-step solutions concepts may provide a viable design in practice, but their economic implications must be further scrutinised in future research.
More details on these results are in the Appendix \S\ref{sec:beyond-market}.

\subsection{Related Work}
\label{sec:literature}


Contemporaneous to our work (shortly after our online version appeared), \citet{canidio2023batching} describe a single CFMM which trades in two assets and executes trades in batches. Their CFMM maximizes its trading function, akin to our Trading Rule U. They show that such a design can eliminate arbitrage profits when there is competition between multiple arbitrageurs. Our work differs in that we develop an axiomatic framework to study batch exchanges where multiple CFMM can interoperate, and the exchange trades in multiple assets rather than just two. 

Most closely related to our work in the blockchain space is the work of the company CoWSwap (Coincidence of Wants-- Swap, formerly known as Gnosis); their implementation details are provided in \citet{dfusionmodel}. As with our work, they study batches that incorporate CFMMs and trade in multiple assets. They take an optimization approach with various objectives, such as maximizing trade volume or maximizing trader utility. Their solutions are based on mixed integer programs and do not have runtime guarantees. In contrast, we take an axiomatic approach and propose polynomial time solutions for finding market equilibria with certain desirable properties. 

\citet{kyle1985continuous} studies a model of a two-asset batch auction, in which a market maker a priori declares a pricing rule that maps the excess demand for the asset to a price. 
In contrast, we consider multi-asset batches and the case of potentially many market makers with their own CFMM trading functions.


Another blockchain-based system ZSwap designed by \citet{zswap} considers batches of two assets only. They aggregate \textit{market} orders in a batch and execute the excess demand with a CFMM per its constant function trading rule, subject to a maximum slippage tolerance. 

\citet{galal2021publicly}, \citet{constantinides2021block}, and \citet{memar2023practical} develop blockchain protocols to securely run batch auctions on-chain.


Automated market-making strategies have been studied both in a cryptocurrency context \cite{angeris2020improved} 
and in traditional exchanges \cite{glosten1985bid,gerig2010automated,kyle1985continuous,othman2013practical}. CFMMs form a subclass of automated market-making.
There has been extensive study on how the design of a CFMM trading function interacts with the economic incentives of those who invest in it
\cite{evans2021optimal,capponi2021adoption,neuder2021strategic,fan2022differential,cartea2022decentralised}. 
\citet{schlegel2022axioms,frongillo2023axiomatic,park2022conceptual} study the axiomatization of meaningful CFMM trading functions. \citet{goyal2022finding,milionis2023myersonian} develop frameworks for finding `optimal' CFMM trading functions.  However, this direction of work is orthogonal to the subject of study of this paper. We are interested only in the contract the CFMM enters into with the batch exchange operator.
%


\citet{budish2015high,budish2014implementation,aquilina2022quantifying} study the economic performance of batch auctions between pairs of assets. 
The well-studied model of \citet{arrow1954existence} forms the basis for multi-asset, pure exchange batch trading implementations 
\cite{speedexpricingcap,dfusionmodel}.
There are many classes of algorithms for (approximately or exactly) computing equilibria in Arrow-Debreu exchange markets, including iterative methods (or T\^atonnement)
\cite{cole2008fast,codenotti2005market,codenotti2005polynomial,bei2019ascending,garg2019auction}, 
convex programs for the case of linear utilities \cite{devanur2016rational, jain2007polynomial,nenakov1983one,cornet1989linear}, convex program for Cobb-Douglas utilities \cite{curtis1985finite}, 
and combinatorial algorithms \cite{jain2003approximating,devanur2003improved,duan2015combinatorial,garg2019strongly}. 

Most closely related to our work in this line is the convex program of \citet{devanur2016rational}, which also gives concise proof of the existence and rationality of equilibria. We generalize their program beyond linear utilities to 2-asset WGS utility functions. The convex program of \citet{nenakov1983one} (which was also given by \citet{jain2007polynomial}) also goes beyond linear utilities but in an incomparable manner from ours. Their program can handle constant elasticity of substitution (CES)\footnote{A CES utility has the form $u(x) = (\sum_{A \in \mathcal{A}} (w_A x_A)^{\rho})^{1/\rho}$ for nonnegative constants $\{w_A\}_{A \in \mathcal{A}}$. 
In the limit $\rho \rightarrow 0^+,$ we get the Cobb-Douglas utilities of the form $u(x) = \prod_{A \in \mathcal{A}} x_A^{w_A} $ for $\sum_{A \in \mathcal{A}} w_A = 1.$ The Cobb-Douglas function is widely used as the trading function of many common CFMMs \cite{balancer,uniswapv2}.
} utilities and Cobb-Douglas utilities on any number of assets but not the entire class of WGS utility functions. 
Whereas our program handles all WGS utility functions when each agent has utility for two assets.

\section{Axioms of Batch Exchanges with CFMMs}
\label{sec:axioms}
In this section, we describe our set of axioms that specify minimal guarantees that a market solution must provide to the participants in a batch exchange.  



Trading systems must not create or destroy any asset. This is the ``market-clearing'' property of Arrow Debreu exchange market equilibria. Formally:

\begin{axiom} [Asset Conservation]
\label{axiom:conservation}
Let exchange participants $i \in \agentset$ have endowments $\{x_{i,\asset{A}}\}_{\asset{A} \in \assetset}$ 
and receive bundles $\{z_{i,\asset{A}}\}_{\asset{A} \in \assetset}$. For each asset $\asset{A} \in \assetset$, we must have 
$\sum_{i \in \agentset} x_{i,\asset{A}} = \sum_{i \in \agentset} z_{i,\asset{A}}.$
%
\end{axiom}



The core fairness property of a batch trading scheme is that all orders in a batch trade at the same prices, and no trader gets an unfair advantage. Moreover, these prices are arbitrage-free. 
For this, the batch exchange must compute prices for all assets and ensure that no trader can get an allocation of a greater value than the value of their endowment.
  Formally:



\begin{axiom}[Uniform Prices]
\label{axiom:batch_valuations}
An equilibrium of a batch trading scheme has a shared market price $ \price{A}>0 $ for each asset 
$\asset{A}\in\assetset$.  
Let batch exchange participants $i \in \agentset$ have endowments $\{x_{i,\asset{A}}\}_{\asset{A} \in \assetset}$ and get allocated bundles $\{z_{i,\asset{A}}\}_{\asset{A} \in \assetset}$. For each participant $i \in \agentset$, 
$p\cdot z_i \leq p\cdot x_i.$
\end{axiom}

Note that we do not require strict equality between $p\cdot z_i$ and $p \cdot x_i$ in the axiom; instead, we require that no market participant should get a price strictly better than the equilibrium asset prices. However, in conjunction with asset conservation (Axiom~\ref{axiom:conservation}), uniform price (Axiom~\ref{axiom:batch_valuations}) implies that this inequality needs to be strict in equilibrium.

\begin{observation} \label{obs:same_prices}
    Let batch exchange participants $i \in \agentset$ have endowments $\{x_{i,\asset{A}}\}_{\asset{A} \in \assetset}$ and get allocated bundles $\{z_{i,\asset{A}}\}_{\asset{A} \in \assetset}$.  Axioms~\ref{axiom:conservation} and~\ref{axiom:batch_valuations} imply that for each participant $i \in \agentset$, 
$p\cdot z_i = p\cdot x_i.$
\end{observation}

\begin{proof}
    Summing the inequalities of Axiom~\ref{axiom:batch_valuations} over all participants, $\sum_{i \in \agentset} p \cdot z_i \leq p \cdot x_i.$ By asset conservation, 
    $\sum_{i \in \agentset} z_i = \sum_{i \in \agentset} x_i$. Since all $\price{A} > 0,$ all inequalities $p \cdot z_i \leq p \cdot x_i$ must be tight.
\end{proof}


A basic guarantee of batch trading systems in the literature is that a limit sell order is executed in full when the batch price exceeds the limit price and is not executed when the batch price is less than the limit price. When the batch price equals the limit price, the limit order trades any amount between zero and its maximum amount. We require that this condition be satisfied for all limit orders. Mathematically, this corresponds to the following:
\begin{axiom}[Best response trade for limit orders]
\label{axiom:br}
The allocation received by a limit sell order maximizes its linear utility function subject to the market asset prices.
\end{axiom}


The ``constant function market maker'' name may suggest that a CFMM shall trade from reserves $x$ to reserves $z$ only if $f(z)=f(x)$.  
One might consider
encoding this strict equality condition as an axiom; however, real-world CFMM deployments only check the 
weaker condition that $f(z) \geq f(x)$ \cite{uniswapswapcode}.  Keeping this flexibility allows us to have a much richer design space and is crucial to satisfy certain desirable properties of batch trading systems, as we shall see in the next section. 
%

\begin{axiom}[Non-decreasing CFMM trading function]
\label{axiom:cfmm_nondecreasing}
A CFMMs trading function does not decrease due to a trade in the batch, i.e., for a trade from state $x$ to $z$, we must have $f(z)\geq f(x)$.
\end{axiom}

We define market equilibrium as a solution which satisfies all Axioms~\ref{axiom:conservation},~\ref{axiom:batch_valuations},~\ref{axiom:br}, and~\ref{axiom:cfmm_nondecreasing}. Formally:

\begin{definition}[Market Equilibrium] \label{def:market_eq}
    For a batch trading in a set of assets $\assetset$, and market participants $i \in \agentset$ with initial endowments $\lbrace x_{i,\asset{A}} \geq 0\rbrace_{i \in \agentset, \asset{A}\in\assetset},$ a market equilibrium consists of a set of prices $\lbrace \price{A} > 0 \rbrace_{\asset{A}\in\assetset}$ and allocations $\lbrace z_{i,\asset{A}} \geq 0\rbrace_{i \in \agentset, \asset{A}\in\assetset}$ which satisfy Axioms~\ref{axiom:conservation},~\ref{axiom:batch_valuations},~\ref{axiom:br}, and~\ref{axiom:cfmm_nondecreasing} (asset conservation, uniform batch prices, limit orders get a best-response trade, and the CFMM trading functions are non-decreasing).
\end{definition}


\section{Desirable Properties of Market Equilibria}
\label{sec:desirables}

Market equilibria may not be unique, so we turn now to study desirable properties of market equilibria, which will then guide our design of market equilibrium computation algorithms.

\subsection{Pareto Optimality}~

Given the asset prices in equilibrium, the utility that a limit sell order receives is fixed per Axiom~\ref{axiom:br}. However, some market equilibria may have asset prices that provide a worse utility to the limit orders than other admissible market equilibria. We define Pareto optimality as follows.

\begin{definition}[Pareto Optimal Market Equilibria] \label{def:po}

For a batch instance, a market equilibrium $E$ is Pareto optimal if there does not exist another market equilibrium $E'$ which Pareto dominates $E$ for the utility of the limit sell orders.
\end{definition}

Note that in a market with only limit sell orders, \textit{all} market equilibria are Pareto optimal -- this property is lost if CFMMs also participate. There is no natural notion of the utility of a CFMM in our model, and since CFMMs are expected to charge trading fees (we do not model the fees in this paper), we define the Pareto optimality with respect to the utility of the limit orders only.

We illustrate here with an example that multiple market equilibria may exist even for simple instances, and many of those may not be Pareto optimal.

 \begin{example}[Pareto optimal market equilibria]
     Consider an instance of a batch exchange trading in assets $A$ and $B$, with a CFMM $f(x) = x_A x_B,$ in the state $x_A = 1, x_B=4,$ and a limit order $l = (A,B,1,1)$ i.e., for selling up to $1$ unit of $A$ for $B$ with a minimum price of $1$.
     
     The set of market equilibria is given by the asset prices $(p_A = r, p_B=1)$ for $r \in [1,2].$ The corresponding bundle that the limit order receives is $(z^l_A = 0, z^l_B = r),$ and that the CFMM receives is $(z^c_A = 2, z^c_B = 4-r).$
     The unique Pareto optimal market equilibrium corresponds to $r=2.$


 \end{example}

Pareto optimality safeguards the interests of the limit order based traders. We now give desirable properties aimed at safeguarding the interests of the CFMM liquidity providers.

\subsection{Price Coherence and Mitigating Cyclic Arbitrage}

For a CFMM trading standalone (not in a batch exchange), its state changes only in response to trade requests and not directly in response to the prices of assets in the external market. When the external market prices change, the CFMM's spot price is now ``stale.'' 
An arbitrageur can make a risk-free profit by buying from the CFMM some units of the asset whose relative price has increased on the external market and selling it there at the new (higher) price. 

An arbitrage opportunity also arises when a group of CFMMs are mispriced with respect to each other. %
Towards this, we define ``cyclic arbitrage'', which motivates important design considerations and properties we desire from market equilibria in batch exchanges.

\begin{definition}[Cyclic Arbitrage of a Group of CFMMs] \label{def:ca}
    A  group of CFMMs $\mathcal{C}$ is in a state of cyclic arbitrage if it is possible to obtain a non-zero amount of any asset for free by trading with the group $\mathcal{C}$ simultaneously.
\end{definition}

We now define a related property on the spot prices of CFMMs, which provides a handle on the analysis of cyclic arbitrage.

\begin{definition}[Price Coherence of a Group of CFMMs] \label{def:pc}
    A group of CFMMs $\mathcal{C}$ has price coherence if there exists a set of asset prices $\lbrace q_A>0\rbrace_{A\in\mathcal{A}}$ such that for each asset pair $(A,B) \in \mathcal{A}^2,$ every CFMM $C \in \mathcal{C}$ that trades in asset pair $(A,B)$ has a spot price $\frac{q_A}{q_B}$ units of B per unit of A.\footnote{When the CFMM trading function is not differentiable, we require that $\frac{q_A}{q_B}$ be in the set of subgradients of the level curve of the trading function, and all results regarding PC will continue to hold.}
\end{definition}

Price coherence precisely characterizes cyclic arbitrage in CFMMs.

\begin{proposition} \label{prop:ca-pc}
    A group of CFMM are in a state of cyclic arbitrage if and only if they are not in a state of price coherence.
\end{proposition}

\begin{proof}
Follows from the ``no-arbitrage condition'' of \citet[pg. 121]{angeris2022optimal}. They give a convex optimization framework trading with a group of CFMMs. In the case of no fee, for the objective of maximizing the sum of quantities of assets one can get without giving any asset to the CFMMs in net, they show that the objective value is zero if and only if there exists a global set of prices for
the assets consistent with the spot prices of the CFMMs.
\end{proof}

Regarding the operation of our batch exchange, we want the CFMMs participating in the exchange to be in a state of price coherence after each batch is executed. 

\begin{definition}[Price Coherence of Market Equilibria (PC)]
A batch exchange market equilibrium satisfies PC if the group of participating CFMMs have price coherence post-batch.
\end{definition}

We also introduce a weaker condition than PC for equilibrium, which might be more applicable in situations where PC is too strict.

\begin{definition}[Preservation of Price Coherence (PPC)] \label{def:ppc}
A batch exchange satisfies PPC if, for any batch instance with  
    a group of CFMMs in a state of price coherence pre-batch, the market equilibria computed by the exchange satisfy PC.
\end{definition}

PPC safeguards the CFMMs, but it comes at a cost. Towards this, we recall and prove Theorem~\ref{thm:pc-po}.
\begin{theorem*}[\ref{thm:pc-po} restatement]
A batch exchange cannot simultaneously guarantee Pareto optimality for limit orders and preservation of price coherence (PPC) for CFMMs.
\end{theorem*}

We prove this theorem by constructing examples of batch exchange instances
demonstrating the claim.  The examples demonstrate the intuition behind the theorem, but there is nothing special about the precise details of the construction.  Informally, it is possible for most sets of CFMMs (using an LCR) to find limit orders on which a batch exchange cannot guarantee PO.  Instead, we characterize the precise cases where this tradeoff does not occur in Theorem \ref{thm:univ3-ppc}.

\begin{proof}
   Consider a batch instance trading in assets $\mathcal{A} = \{A,B\}.$ There are two CFMMs: 
    
    $C_1$ with trading function $f_1(x) = x_Ax_B$ and pre-batch state $(x_A = 1, x_B=2)$ with spot price $2$, and
    
    $C_2$ with trading function $f_2(x) = x_A^2x_B$ and pre-batch state $(x_A=1,x_B=1)$ with spot price $2$.

    There is a limit sell order $(A,B,1,1)$, i.e., to sell up to 1 unit of A for B for a price of at least 1.

    The set of market equilibria is given by asset prices $(p_A = r, p_B = 1)$ for $r\in [1,(8+\sqrt{10})/9]$ and corresponding allocations.

The unique Pareto optimal market equilibrium corresponds to $r = (8+\sqrt{10})/9 \approx 1.24.$
   In this equilibrium, the post-batch spot price on CFMM $C_1$ is $\approx 0.769$ and that on CFMM $C_2$ is $\approx 0.748$.
\end{proof}

\subsection{Joint Price Discovery and Mitigating Parallel Running }

We motivated PC and PPC, intending to ensure that the group of CFMMs participating in the batch exchange do not end up in a state of cyclic arbitrage. However, the problem of arbitrage in financial systems is not limited to the trades that can be made with a group of CFMMs. 

Consider the following form of arbitrage.

\begin{definition}[Parallel Running] \label{def:pr}
For a batch instance, a parallel running arbitrage opportunity exists if for any real numbers $\tilde a, \tilde b >0$ and any assets $A, B \in \mathcal{A},$
    a trader can sell $\tilde a$ units of  asset $A$ in exchange for $\tilde b$ units of asset $B$ in the batch and, can then obtain $\tilde a$ units of asset $A$ in exchange for $\hat b < \tilde b$ units of asset $B$  by trading with the participating CFMMs post-batch.
\end{definition}


Parallel running is a similar concept to front running, which corresponds to making a trade based on advanced knowledge of future orders. 
By definition, precisely identifying parallel running opportunities requires knowledge of all the other batch participants and the equilibrium computation algorithm. However, estimating the chances and magnitudes of such opportunities with only partial information may be possible -- we leave this investigation for future work. We show here that we can design equilibrium computation algorithms that eradicate all parallel running opportunities, even if the arbitrager has perfect information about the batch.

First, we describe a property of market equilibria.

\begin{definition}[Joint Price Discovery (JPD)] \label{def:jpd}
   Let group $\mathcal{C}$ of CFMMs participate in a batch. A market equilibrium with asset prices $\lbrace p_A>0\rbrace_{A\in\mathcal{A}}$ satisfies JPD if for each asset pair $(A,B) \in \mathcal{A}^2,$ every CFMM $C \in \mathcal{C}$ that trades in pair $(A,B)$ has a post-batch spot price $\frac{p_A}{p_B}$ units of B per unit of A.
\end{definition}

In other words, JPD corresponds to the case where the post-batch CFMM spot prices are the same as the batch prices.
Observe that JPD is a particular case of PC. We use this name for this property since markets `discover' new prices as a result of trades, and JPD ensures that the participating CFMMs also discover the prices which emerge as a result of trading on the batch exchange.

The following result establishes the importance of JPD as a defence against parallel running.

\begin{lemma}
    Joint price discovery (JPD) makes parallel running impossible.
\end{lemma}

\begin{proof}
  See that in equilibrium, each order in the batch executes at the batch prices given by the ratios of asset prices $\lbrace p_A>0\rbrace_{A\in\mathcal{A}}.$ By quasi-concavity of CFMM trading function $f$, for any trade, the price obtained is no better than the spot price. By JPD, the spot price is equal to the batch price, so parallel running is impossible.
\end{proof}


We show in Appendix~\ref{sec:u_only_prf} that for a reasonable regularity condition ``split invariance'' on equilibrium computation algorithms (in Definition~\ref{def:split_indiff}) and for ``large'' batch instances (as in Definition~\ref{def:large_batch}), JPD is a \textit{necessary} condition to eradicate parallel running in batch exchanges. 

\subsection{Locally Computable Response for CFMMs} \label{sec:lcr}

Agents often need predictability in their trade in a batch, subject to the batch prices. While this is guaranteed for limit orders axiomatically (Axiom~\ref{axiom:br}), it would also be good for liquidity providers who invest their capital in CFMMs for market making.

\begin{definition}[Locally Computable Response (LCR)] \label{def:lcr}
A CFMM's trading rule in a batch exchange satisfies LCR if it is a map 
$F(x, f, p)\rightarrow z,$
where 
$\lbrace x_A\rbrace_{A\in\mathcal{A}}$ is the pre-batch CFMM state, $f$ is its trading function, and $\lbrace p_A\rbrace_{A\in\mathcal{A}}$ is 
a set of equilibrium asset prices. The output
$\lbrace z_A \rbrace_{A\in \mathcal{A}}$ is the post-batch CFMM state. 
 $F(x, f, p)$ must be invariant under rescaling of $p$.
\end{definition}

LCR provides predictability to liquidity providers and makes participation more lucrative (apart from the fees they charge, which are not considered in our model). However,
as with other desirable properties, we lose some design space if we impose LCR for CFMMs. For example, we have the following impossibility result.

\begin{proposition}\label{prop:pwtv}
   The price-weighted trade volume of a limit sell order is $||p \cdot |x-z| ||_1$ where $x$ and $z$ are their endowment and equilibrium allocation, respectively, and $p$ are the batch price.

No batch exchange satisfying LCR for CFMMs can guarantee a market equilibrium that maximizes the sum of the price-weighted trade volumes of the limit sell orders. 
\end{proposition}

\begin{proof}
Consider a batch exchange with a single CFMM with trading function $f(x) = x_A x_B$ and pre-batch state $x_A = 1, x_B = 4.$ We study two batch instances:
\begin{enumerate}[leftmargin = 18pt]
    \item There is one limit sell order $(A,B,3,1)$ i.e., for selling up to $3$ units of $A$ for a minimum price of $1$ $B$ per $A$.
    All equilibria have the same (up to rescaling) asset prices: $p_A = p_B = 1$. 
    The ``price-weighted trade volume of limit orders'' is maximized by the equilibrium at which the CFMM buys $3$ units of $A$ from the limit order.
    \item There is one limit sell order $(A,B,3,1)$ and another limit sell order $(B,A,3,1).$ 
    All equilibria have the same (up to rescaling) asset prices: $p_A = p_B = 1$. 
    The ``price-weighted trade volume of limit orders'' is maximized by the equilibrium at which the CFMM makes no trades.
\end{enumerate}
Since the equilibrium asset prices are the same in the two instances, any LCR trading rule for the CFMM cannot distinguish between the two instances and cannot optimize for the   
price-weighted trade volume of limit orders.
\end{proof}

 \citet{dfusionmodel} gives a mixed-integer program for the problem of finding a market equilibrium that maximizes the price-weighted trade volume of limit orders. 
 Finding polynomial-time algorithms for this objective or showing that none exist is an interesting open problem.
In the same vein as Proposition~\ref{prop:pwtv}, also recall the impossibility result of Theorem~\ref{thm:lcr-po}.
\begin{theorem*}[\ref{thm:lcr-po} Restatement]
A batch exchange cannot simultaneously guarantee Pareto optimality for limit orders and a locally computable response (LCR) for CFMMs.
\end{theorem*}

\begin{proof}
In the examples we construct, all CFMMs trade in two assets.
First consider the case of LCR CFMM trading rules which, for some starting state $x,$ trading function $f,$ and batch prices $p,$ demand an allocation $z'$ strictly ``above the curve'', i.e., $f(z') > f(x).$ Define $z$ as $z = \sup\limits_{\zeta~\in~\{\hat z|p\cdot(\hat z-x)~=~0;~f(\hat z)~\geq~f(x)\}} \|p\cdot~|\zeta-x|~\|_1$.

Consider a batch with one market sell order selling $z_A-x_A$ units of $A$ for $B.$ The unique Pareto optimal market equilibrium is where the CFMM attains state $z,$ which contradicts its LCR rule. 

Now, consider the case of LCR CFMM trading rules which, for all starting states $x,$ trading functions $f,$ and batch prices $p,$ demand allocation $z$  ``on the curve'', i.e., $f(z) =f(x).$  For 2-asset CFMMs, by the strict quasi-concavity of trading function $f$, either the CFMM makes no trade (i.e., $z$ = $x$) or demands allocation $z$ such that $z = \sup\limits_{\zeta~\in~\{\hat z|p\cdot(\hat z-x)~=~0;~f(\hat z)~\geq~f(x)\}} \|p\cdot~|\zeta-x|~\|_1$. Recall that this corresponds to Trading Rule S. We consider both these cases.
\begin{enumerate}[leftmargin = 18pt]
    \item There exists a CFMM $C$ which trades in assets $A$ and $B$ in the batch with an LCR trading rule which, for some starting state $x,$ trading function $f,$ and batch prices $p$ (not equal to the CFMM spot price), demands an allocation $z$, where $z = x$ (that is, makes no trade).

    Consider a batch where no other group of CFMMs together trades in $A$ and $B$. Consider a batch instance with a market sell order for $\hat a>0$ units of asset $A$ for $B$. The equilibrium under the above-stated LCR is sub-optimal for the limit order than receiving any non-zero amount of $B$.

    \item Consider a case where all CFMMs in the batch are two-asset CFMMs, and they follow the LCR Trading Rule S. Consider the following example with two CFMMs and two limit sell orders.

    CFMM $C_1$ trades in assets $A$ and $B,$ has $f(x) = x_Ax_B$ and a starting state  $(\hat x_A = 1, \hat x_B = 4).$

    CFMM $C_2$ trades in assets $B$ and $D,$ has $f(x) = x_Bx_D$ and a starting state  $( \hat x_B = 1, \hat x_D = 4).$

Limit order $L_1 = (A,D,3,1)$ i.e., to sell up to $3$ units of $A$ for $D$ at a minimum price of $1$ D per A. 

Limit order $L_2 = (A,B,1,1.5).$ 

On using Trading Rule S for both CFMMs, a market equilibrium $E$ has asset prices
$p_A=p_B=p_D=1$.
 CFMM $C_1$ buys $3$ units of $A,$ sells $3$ units of $B,$ and ends up in a state of $x_A = 4$ and $x_B = 1$.
 CFMM $C_2$ buys $3$ units of $B,$ sells $3$ units of $D,$ and ends up in a state of $x_B = 4$ and $x_D = 1$.
 Limit order $L_1$ trades in full and limit order $L_2$ does not trade.

  Another market equilibrium, $\hat E$, has asset prices $p_A=2; p_B = 1; p_D=2$. 

 CFMM $C_1$ buys $1$ unit of $A$ and sells $2$ units of $B$. CFMM $C_2$ makes no trades.   Limit order $L_1$ makes no trade and limit order $L_2$ trades in full (sells $1$ unit of $A$ for $2$ units of $B$).
 The utility of $L_1$ is the same in $E$ and $\hat E$, whereas that of $L_2$ is strictly higher in $\hat E$. Therefore market equilibrium $\hat E$ is a Pareto improvement over $E$. \qedhere
\end{enumerate}
\end{proof}

\section{Algorithms to Achieve Desirable Properties in Market Equilibrium}
\label{sec:pos}
These negative results in the previous section illuminate the fundamental trade-offs a batch exchange must operate under. In this section, we focus on the positive results. We design algorithms that obtain some of the desirable properties of the market equilibrium. For this, we adopt a viewpoint from the perspective of the CFMMs. Any deterministic algorithm for equilibrium computation can be described as a \emph{trading rule} for a CFMM, a function which specifies the CFMM's allocation, given all information of the batch instance.


\begin{definition}[CFMM Trading Rule] \label{def:tr}
A  CFMM's {\it Trading Rule} in a batch exchange is a map 
$F(x, f, p, \Gamma )\rightarrow z,$
where 
$\lbrace x_A\rbrace_{A\in\mathcal{A}}$ is the pre-batch CFMM state, $f$ is its trading function, $\lbrace p_A\rbrace_{A\in\mathcal{A}}$ is 
a set of equilibrium asset prices, and $\Gamma$ is all other inputs to the equilibrium computation algorithm (including information of all other agents). The output
$\lbrace z_A \rbrace_{A\in \mathcal{A}}$ is the post-batch CFMM state.
\end{definition}

Consider the following natural CFMM trading rule, which views it as a utility-maximizing agent.


\begin{definition}[Trading Rule U] \label{def:tru}
    With its trading function as a pseudo-utility function, give the CFMM a pseudo-utility-maximizing bundle of assets subject to the asset prices. 
    $$F_U(x, f, p) =  \sup\limits_{\zeta \in \{\hat z~|~p\cdot (\hat z -x) = 0\}} f(\zeta).$$
        Under Assumption~\ref{ass:cfmm} on CFMM trading functions, the allocation under Trading Rule U is unique.
\end{definition}
An illustration of Trading Rule U is in Figure~\ref{fig:rules}.
Notice that Trading Rule U satisfies LCR. That is, the CFMM's allocation in equilibrium interacts with the rest of the batch only via the asset prices $p$. When the CFMM demand under Trading Rule U satisfies an additional condition -- Weak Gross Substitutability (WGS)
\footnote{A demand function satisfies WGS if for all asset prices $\{p_A\}_{A \in \mathcal{A}},$ on decreasing the price $p_{A^*}$ of an asset $A^*$ while keeping all other prices $\{p_A\}_{A \in \mathcal{A}\setminus A^*}$ constant, the demand of all  assets other than $A^*$ does not decrease. WGS of all agents is a sufficient condition for the existence of market equilibria \cite{kuga1965weak} in Arrow-Debreu exchange markets.}
\footnote{Not all CFMM trading functions correspond to WGS demand functions under Trading Rule U -- many natural classes of trading functions do, for example, those given by monomials. On the flip side, some seemingly natural trading functions, for example, that of \textit{Curve} \cite{egorov2019stableswap}, do not. 
} --  Trading Rule U in the batch exchange can be implemented by well-known algorithms for computing equilibria in Arrow Debreu markets, such as the T\^atonnement-based algorithm described in \citet{codenotti2005market},
or our convex program of  \S\ref{sec:cvx}.
The simple Trading Rule U also has other desirable properties besides computational tractability.

\begin{proposition*} [\ref{prop:u-jpd} Restatement]
    A batch exchange has JPD if and only if it uses Trading Rule U for all CFMMs. 
\end{proposition*}
\begin{proof}
   The CFMM trading function $f$ is strictly quasi-concave. Thus, to maximize $f$ on a hyperplane (as in Trading Rule
U) is to find the point where the gradient of $f$ is normal to the hyperplane. This point is unique under Assumption~\ref{ass:cfmm}. Since the gradient of $f$ at
a point is equal to the spot price at said point (up to rescaling), 
setting the spot prices equal to the batch prices corresponds to Trading Rule U.
\end{proof}

Recall that JPD is a particular case of PC -- batch exchanges implementing Trading Rule U for all CFMMs also achieve PC. Adopting Trading Rule U protects CFMMs from both cyclic arbitrage and parallel-running-based arbitrage. The natural interpretation of Trading Rule U as treating CFMMs as utility-maximizing agents may be normatively significant in many scenarios.
%

We also study Trading Rule S, which is applicable only for batches with 2-asset CFMMs\footnote{For clarification, here we consider the cases where a CFMM trades in only two assets, but each CFMM may be trading in an arbitrary pair of assets, and the entire batch trades in an arbitrary number of assets.}. It maximizes the price-weighted trade volume of the CFMM, given the batch prices.
\begin{definition}[Trading Rule S] \label{def:trs}
    Maximize the CFMM's price-weighted trade size subject to the non-decreasing trading function constraint. Under Assumption~\ref{ass:cfmm}, the allocation is unique.
$$F_S(x, f, p) =  \sup\limits_{\zeta \in \{\hat z~|~p\cdot (\hat z -x) = 0;~f(\hat z) \geq f(x)\}} \|p\cdot|\zeta-x|\|_1.$$

\end{definition}

An illustration of Trading Rule S is in Figure~\ref{fig:rules}. Informally, it corresponds to trading ``all the way'' up to the point where the trading function non-decreasing constraint is tight. It always ends up on the same level curve of the trading function as the initial state. For the extreme sparse case of a single CFMM and a single limit order, Trading Rule S mimics a standalone CFMM, and therefore, in this case, a limit order based-trader does not strictly prefer trading ``outside the batch'' with the CFMM.
Notice that Trading Rule S also satisfies LCR, i.e., it gives a CFMM's ``demand'' as a response to asset prices. 
All CFMM trading functions lead to a WGS demand under Trading Rule S,
and as such, the equilibrium computation problem is computationally tractable
(for example, once again, using the algorithm of \citet{codenotti2005market}
or our convex program of \S \ref{sec:cvx}).

Although per Theorem~\ref{thm:lcr-po}, when guaranteeing LCR for all CFMMs, the batch exchange cannot guarantee Pareto optimal equilibria, it can nonetheless do so by using Trading Rule S for the special case of batches trading in only two assets when the CFMMs are price coherent pre-batch.

\begin{proposition} \label{prop:s-pareto-2asset}
 For batches trading in only two assets, if the CFMMs in the batch are price-coherent pre-batch, the equilibrium obtained by implementing Trading Rule S for all CFMMs is Pareto optimal.   
Trading Rule S is the only LCR CFMM trading rule with this property.
\end{proposition}

Proposition~\ref{prop:s-pareto-2asset}, in conjunction with the impossibility result of Theorem~\ref{thm:lcr-po}, shows how multi-asset batch exchanges pose much more analytical challenges than two-asset batch exchanges. While achieving Pareto optimality is possible under a natural LCR trading rule for two-asset batches (when pre-batch PC holds), it is impossible in the general case. Achieving Pareto optimal outcomes even when compromising on LCR is a non-trivial open problem.

Trading Rule S, in general, does not satisfy PPC. However, this negative result is bypassed in batches with only ``Concentrated Liquidity Constant Product'' (CLCP) CFMMs.

\begin{definition}[Concentrated Liquidity Constant Product (CLCP) Trading Functions] \label{def:clcp}
   A trading function in the CLCP class is $f(x)= (x_A+\hat{x}_A)(x_B+\hat{x}_B)$
for constants $\hat{x}_A > 0, \hat{x}_B > 0.$ 
\end{definition}

This class of CFMMs allocates liquidity in an interval of prices and implements the constant-product trading function in that interval. This price interval is $(\frac{\hat x_B^2}{K}, \frac{K}{\hat x_A^2})$ for constant $K$ denoting the initial value of the  CFMM trading function.  This class includes the constant-product CFMM \cite{uniswapv2} (where liquidity is spread to all prices, $0$ to $\infty$, by setting $\hat x_A = \hat x_B = 0$) and the constant-sum CFMM (where liquidity is concentrated at a single price). Combinations of CLCP trading functions form the basis of the widely successful decentralized exchange protocol Uniswap V3 \cite{uniswapv3}. Recall  the property of CLCP trading functions from Theorem~\ref{thm:univ3-ppc}.

\begin{theorem*} [\ref{thm:univ3-ppc} Restatement]
CLCP is the unique class of CFMMs such that if all CFMMs in a batch belong to this class, then the market equilibria attained by implementing Trading Rule S for all CFMMs satisfy PPC.    
\end{theorem*}

\begin{hproof}
Let $g_C(p,q)$ denote the spot price of a CFMM $C$ under Trading Rule S for batch price $p$ and initial spot price $q$.
We show that $g_C(p,q)$ must satisfy \textit{involution} on spot prices -- for any batch price $p,$ if $g_C(p,q) = \hat q,$ then $g_C(p,\hat q) = q$. We further show that for PPC, a necessary condition is $g_C(1,q) g_C(1,1/q) = 1.$ A trivial condition for all CFMMs is $g_C(p,p) = p.$ We then show that the only two solutions to these conditions are $g_C(p,q) = q$ and $g_C(p,q) = p^2/q.$ We also show that these solutions satisfy the general form of the PPC condition. The first case corresponds to the constant sum CFMM, and the second corresponds to the rest of the CLCP class.
\end{hproof}

%

\section{A Convex Program for 2-asset WGS Demands}
\label{sec:efficient}

\label{sec:cvx}

Here we give a convex program that computes market equilibria in batch exchanges that incorporate CFMMs that
trade between two assets and use locally-computable trading rules that satisfy WGS. 
Alternatively, this program computes equilibria 
in Arrow-Debreu exchange markets where every agent's demand response satisfies WGS, and every agent
has utility in only two assets.
The program is based on the convex program of Devanur 
et al. \cite{devanur2016rational} for linear exchange markets.

The key observation is that 2-asset CFMMs satisfying WGS can be viewed as (uncountable) collections of agents with linear utilities and
infinitesimal endowments.  This correspondence lets us replace a summation over agents
with an integral over this collection of agents.  However, proving the correctness of our program requires direct analysis of the objective instead of the argument based on Lagrange multipliers used in \cite{devanur2016rational}.
The objective is smooth on the feasible region, and gradients are easily computable for many natural CFMMs.

\subsection{From a Demand Function to a Continuum of Limit Orders}
\label{sec:cvxamm}

Suppose that a batch participant is only interested in two assets $A$ and $B$.
We first correspond such an agent to a continuum of exchange market agents with 
linear utility functions that trade between $A$ and $B$.
Under LCR, an agent's demand function can only depend on the exchange rate between the two assets (that is, $r_{AB}=\frac{p_A}{p_B}$).
We can therefore define a function $S_{A}(\cdot)$ that gives, for any price $r_{AB}$, the amount of asset $A$ that
the agent sells (in net, relative to its initial endowment).

\begin{definition}[Supply]
\label{defn:cumulative_density}
Consider a batch participant with endowment $x$ that only trades between assets $A$ and $B$.
The \textit{Supply} of the participant of asset $\asset{A}$ at exchange rate $r_{AB}>0$ is 
the set 
$S_{\asset{A}}(r_{AB})=\lbrace \max(0, x_{\asset{A}}-z_{\asset{A}}) \rbrace$, where $z$ is an amount of $\asset{A}$
that could be held by the agent after a batch is settled.

For agents maximizing a utility function (i.e. limit sell orders), $z$ ranges over utility-maximizing allocations, and
for a CFMM using some trading rule, $z$ is the allocation specified by the trading rule.
\end{definition}

A supply function $S_A(\cdot)$ is \textit{monotonic} if, for any $r_{AB} <\hat r_{AB}$ and 
$z_1 \in S_A(r_{AB})$, $z_2\in S_A(\hat r_{AB})$, we have $z_1 \leq z_2$.

For CFMMs under Trading Rule U or S, for every price $r_{AB}$, it must be the case that either
$0\in S_{\asset{A}}(r_{AB})$ or $0\in S_{\asset{B}}(1/r_{AB})$.
In the rest of this section, it will be convenient to consider the supply function for each asset separately. 
 Specifically, for the purpose of the convex program,
we will assume that an exchange market consists of a set of supply functions, and that each supply function $i$
sells good $\asset{A}_i$ and buys $\asset{B}_i$ (and therefore write only $S_i(\cdot)$, when clear from context).

When a CFMM trades based on an LCR trading rule, $S_i(\cdot)$ outputs a single point.
Similarly, for an agent maximizing a utility function, $S_i(\cdot)$ always outputs a single point when
the utility function is strictly quasi-concave.  Many natural LCR CFMM trading rules, 
such as rules S and U
correspond to 
differentiable supply functions when CFMM trading functions are strictly quasi-concave.

Proving existence of equilibria requires that each $S_i(\cdot)$ has a closed graph (for Kakutani's fixed-point theorem).
This property holds for
rules S and U when
trading functions are quasi-concave.

\begin{definition}[Inverse Supply]
$S_i^{-1}(t)$ is the least upper bound on the set $\lbrace r\vert \max(S_i(r)) \leq t\rbrace$.
When the set is empty, $S_i^{-1}(t)=0$, and is $\infty$ when the set is unbounded.
\end{definition}

We make the following simplifying assumption in the rest of the discussion.
\begin{assumption}
\label{ass:functionalform}
Every Supply function $S_i(\cdot)$ is either a monotonic, differentiable, point-valued function
with $S_i(\infty) > 0$
or a ``threshold'' function of following form: for some constants $r_i\geq 0$ and $\hat S_i>0$, $S_i(r)=0$ for all $r<r_i$, $S_i(r)=\hat S_i$ for $r>r_i$,
and $S_i(r_i)=[0, \hat S_i]$.
\end{assumption}

The results below only require that an agent's net trading behaviour at equilibrium is expressible as a sum of finitely many supply functions
trading from one asset to another.  
Continuous, strictly quasi-concave trading functions give point-valued, differentiable supply functions, and linear utility functions (i.e. each limit sell order)
 give threshold supply functions. 
When supply functions are differentiable, we can define the marginal supply of a CFMM at each exchange rate.

\begin{definition}[Marginal Supply function]
The marginal supply function $s(r)$ of an agent selling $\asset{A}$ in exchange for $\asset{B}$ is $\frac{d S(r)}{dr}$.
\end{definition}

The marginal supply function represents the marginal amount of $\asset{A}$ that an agent sells at each price.
Informally, $s(r)$ denotes the size of a limit sell offer with minimum price $r$,
and an agent with supply function $S(r)$ behaves as a continuum of these marginal limit sell offers.

Supply functions are monotonic if and only if an agent's behaviour satisfies WGS.

\begin{lemma}
\label{lemma:nonnegative_wgs}
A CFMM with a differentiable $S_i(\cdot)$ satisfies WGS if and only if $S_i(\cdot)$ is monotonic.
\end{lemma}

\begin{proof}

If there is some $r$ such that $S(r)$ is strictly decreasing at $r=\price{A}/\price{B}$, 
then there is some $\price{B}^\prime$ such that
$\price{B}^\prime < \price{B}$ (so $r^\prime=\price{A}/\price{B} > r$), $S_i(r) > S_i(r^\prime)$, and $S_i(\cdot)>0$ on the interval $[r, r^\prime]$
(so the other asset's supply function is $0$ on this interval).
In other words, a decrease in the price of $\asset{B}$ causes the agent to sell less $\asset{A}$, which means that its demand for $\asset{A}$ increases and violates WGS.

Conversely, if $S_i(r)$ is always nondecreasing, then for every $r=\price{A}/\price{B}$ and $r^\prime=\price{A}/\price{B}^\prime$ 
with $\price{B}^\prime < \price{B}$ ($r^\prime > r$),
$S_i(r^\prime) \geq S_i(r)$.  Thus, demand for $\asset{A}$ never increases as the price of $\asset{B}$ falls.
\end{proof}

\begin{corollary}
    All two-asset CFMMs with trading functions satisfying Assumption~\ref{ass:cfmm} have a WGS demand function under Trading Rule S.
\end{corollary}

\begin{proof}
    Trading functions satisfying Assumption ~\ref{ass:cfmm} are nondecreasing in every asset.  
    At price $r$, the CFMM with endowment $x$ sells $S(r)$ units of $\asset{A}$ for $rS(r)$ units of $\asset{B}$.
    At $r^\prime > r$, the CFMM has sufficient budget to purchase $r^\prime S(r)$ units of $B$, and if it does not purchase at least this much,
    then the trading function cannot be nondecreasing from $(x_{\asset{A}} - S(r), x_{\asset{B}} + rS(r))$ to 
    $(x_{\asset{A}} - S(r), x_{\asset{B}} + r^\prime S(r))$
\end{proof}

\subsection{Convex Program}

\label{sec:cvxprogram}

We assume that a set of agents has utility functions that imply a set of supply functions that satisfy Assumption \ref{ass:functionalform}, with $S_i(\cdot)$ for $i\in \cvxcfmmset{}$ continuous and $S_i(\cdot)$ for $i\in \cvxloset$ threshold functions. These agents trade a set of assets $\assetset$. Variables $\lbrace \price{A}\rbrace$ denote the price of assets $\asset{A}\in \assetset$. 

The quantity $y_i/\price{A}_i$ denotes the amount of good $\asset{A}_i$ that supply function $i$ sells to the market, receiving $y_i/\price{B}_i$ units of $\asset{B}_i$.  By construction, at equilibrium, it must be the case that $y_i/\price{A}_i\in S_i(\frac{\price{A}}{\price{B}})$. 

Define $\beta_{i,r}(p)=\min(\price{A}_i, \price{B}_i r)$. Informally, $\beta_{i,r}(p)$ is the inverse best bang-per-buck for the marginal limit sell order at limit price $r$ of supply function $i$.

Finally, for continuous marginal supply functions,
define $g_i(t)$ to be $\int_0^{S_i^{-1}(t)} s_i(r)~\ln(1/r)~dr$.
For threshold supply functions, define $g_i(t)=\min(\hat S_i, t)\ln(1/r_i)$.


\begin{lemma}
\label{lemma:gisconcave}
$g_i(\cdot)$ is a concave function, and $-\price{A}_ig_i(y_i/\price{A}_i)$ is convex.
\end{lemma}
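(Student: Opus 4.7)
The plan is to first derive a clean expression for $g_i'$ via a change of variables, deduce concavity from monotonicity of this derivative, and then obtain the convexity claim as an instance of the perspective transform.

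First I would rewrite $g_i$ using the substitution $q = D_i(p)$, which gives $dq = d_i(p)\,dp$ and, when $p = D_i^{-1}(q)$, yields
\[
  g_i(x_i) \;=\; \int_0^{D_i^{-1}(x_i)} d_i(p)\,\ln(1/p)\,dp \;=\; \int_0^{x_i} -\ln\!\bigl(D_i^{-1}(q)\bigr)\,dq.
\]
By the fundamental theorem of calculus this gives $g_i'(x_i) = -\ln\!\bigl(D_i^{-1}(x_i)\bigr)$. Because $d_i \ge 0$ (the CFMM satisfies WGS, invoked via Lemma \ref{lemma:nonnegative_wgs}), the cumulative density $D_i$ is nondecreasing, hence so is $D_i^{-1}$, and therefore $g_i'$ is nonincreasing. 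This proves that $g_i$ is concave.

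For the second claim, I would invoke the standard perspective transform: if $\varphi : \mathbb{R} \to \mathbb{R}$ is convex, then its perspective $(y,p) \mapsto p\,\varphi(y/p)$ is convex on $\{p>0\}$. Applying this to $\varphi = -g_i$, which is convex by the previous paragraph, yields immediately that $-p_{A_i}g_i(y_i/p_{A_i})$ is convex in $(y_i, p_{A_i})$ on the domain $p_{A_i} > 0$, as required.

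The main obstacle, such as it is, is justifying the manipulations when $D_i$ is not strictly increasing (i.e., when $d_i$ has intervals of zero mass, arising from non-strict quasi-concavity of $f$) or when $D_i$ jumps (arising from kinks in $f$). In the former case $D_i^{-1}$ is a well-defined nondecreasing function by the least-upper-bound definition in Definition \ref{defn:cumulative_density}, and the substitution above still produces an absolutely continuous $g_i$ with $g_i'$ given almost everywhere by $-\ln D_i^{-1}$, which remains nonincreasing; concavity of $g_i$ follows from monotonicity of a (right-continuous) derivative. In the latter case, $g_i$ picks up a piecewise-linear component, which is both concave and convex and hence does not affect the inequalities. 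Since the paper already notes that jump discontinuities can be handled separately by splitting the density, the body of the argument can be presented under the simplifying assumption of a continuous, strictly increasing $D_i$.
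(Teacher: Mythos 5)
Your proof is correct and follows essentially the same route as the paper's: both compute $g_i'(x) = -\ln\bigl(D_i^{-1}(x)\bigr)$ (the paper via Leibniz differentiation of the integral, you via a change of variables, which has the minor advantage of avoiding a division by $d_i$ that could vanish), conclude concavity from the monotonicity of $D_i^{-1}$, and obtain the second claim from the perspective transform. Your closing remarks on non-strictly-increasing $D_i$ and jump discontinuities are a welcome bit of extra care but do not change the substance of the argument.
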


\begin{proof}
For continuous supply functions,
$$\frac{\partial}{\partial t}g_i(t) 
= \frac{\partial}{\partial t} \int_0^{S_i^{-1}(t)} s_i(r) \ln (1/r) dr 
= \frac{d(S_i^{-1}(t))}{dt} s_i(S_i^{-1}(t)) \ln \Big(\frac{1}{S_i^{-1}(t)}\Big) 
=-\ln (S_i^{-1}(t))
$$
The first equality is by the definition of $g_i(t)$, the second is by applying the Leibniz integral rule, and the third follows since $s_i(\cdot)$ is the derivative of $S(\cdot).$ 

Since $S_i^{-1}(t)$ is non-decreasing under WGS, the derivative of $g_i(\cdot)$ is a decreasing function and therefore $g_i(\cdot)$ must be concave (for $t\geq 0$).
Concavity clearly holds for threshold supply functions.
$-\price{A}_i g_i(y_i/\price{A}_i)$ is the perspective transformation of a convex function, so it is convex.
\end{proof}



Altogether, we get the following convex program:

\begin{theorem}
\label{thm:cvx}

The following program is convex and always feasible.  Its objective value is always non-negative.
When the objective value is $0$, the solution forms an exchange market equilibrium with nonzero prices,
and when such an equilibrium exists, the minimum objective value is $0$.

\begin{align}
\text{Minimize} & \sum_{i\in \cvxcfmmset} \price{A}_i \int_0^\infty  \!\! s_i(r)~\ln\Big(\frac{\price{A}_i}{\beta_{i,r}(p)}\Big)~dr
  +\sum_{i \in \cvxloset} \price{A}_i \hat S_i \ln \Big(\frac{\price{A}_i}{\beta_{i,r_i}(p)}\Big)
&- \sum_{i\in \cvxcfmmset\cup \cvxloset} \price{A}_i~g_i(y_i/\price{A}_i)
\tag{COP}\label{eq:cop}
\\
\text{Subject to} & \sum_{i \in \cvxboth :\asset{A}_i=\asset{C}} y_i = \sum_{i\in \cvxboth :\asset{B}_i=\asset{C}}y_i & \forall \asset{C}\in\assetset \tag{C1}\label{eq:c1}
\\
~&\price{C} \geq 1 & \forall \asset{C}\in\assetset \tag{C2}\label{eq:c2}
\\ 
~& 0 \leq y_i \leq \price{A}_i S_i(\infty) & \forall i\in \cvxcfmmset\cup \cvxloset. \tag{C3}\label{eq:c3}
\end{align}
\end{theorem}

\begin{proof}
Lemma \ref{lemma:cvx_and_feasible} shows the convexity and feasibility of the program.
Lemma \ref{lemma:cvx_nonneg_zero_at_eq} shows that the objective value is nonnegative, and is $0$ if and only if the optimal solutions 
satisfy $y_i/\price{A}_i \in  S_i(p)$ for all $i$. 
Given the mapping between agents in the Arrow-Debreu exchange market and supply functions in the convex program,
as in Assumption \ref{ass:functionalform}, each $y_i$ implies a transfer of $y_i/\price{A}_i$ units of $\asset{A}_i$
out of the corresponding agent's endowment 
and $y_i/\price{B}_i$ units of $\asset{B}_i$ into their allocation.  By construction of the supply functions,
these transfers give an optimal bundle of assets for each agent.
When an equilibrium exists, the optimal objective value is $0$.
\end{proof}

Lemma \ref{lemma:eq_exists} in the appendix shows that an equilibrium with positive prices exists using the Kakutani Fixed point Theorem, 
given standard mild assumptions (i.e. Assumption \ref{ass:nonsatiated} -- for each asset, there exists an agent who has non-zero utility for it).

\begin{lemma}
\label{lemma:cvx_and_feasible}
The program with objective~\ref{eq:cop} and constraints~\ref{eq:c1},~\ref{eq:c2}, and~\ref{eq:c3} is convex and feasible.
\end{lemma}

\begin{proof}
$\beta_{i,r}(\cdot)$ is concave, positive, and non-decreasing, and $\ln(\cdot)$ is concave and nondecreasing, so each term
$s_i(r)\ln(\price{A}_i/\beta_{i,r}(p))$ is convex, and the integral or sum of convex functions is convex.
Feasibility follows from setting $y_i=0$ for all $i$ and and $\price{C}=1$ for all assets $\asset{C}\in\assetset$.
\end{proof}




\begin{lemma}
\label{lemma:cvx_nonneg_zero_at_eq}

The objective value of the convex program is nonnegative, and is zero if and only if $y_i \in \price{A}_iS_i(\price{A}_i/\price{B}_i)$ for all $i \in \cvxcfmmset\cup \cvxloset$.

\end{lemma}

\begin{proof}

Consider the following three quantities.

\begin{enumerate}
    \item $E_1 = \sum_{i\in \cvxcfmmset} \price{A}_i \int_0^\infty  s_i(r)~\ln\left(\frac{\price{A}_i}{\beta_{i,r}(p)}\right)dr
+\sum_{i \in \cvxloset} \price{A}_i \hat S_i \ln \left(\frac{\price{A}_i}{\beta_{i,r_i}(p)}\right)$

\item $E_2 = \sum_{i\in \cvxcfmmset} \price{A}_i \int_0^{S_i^{-1}(y_i/\price{A}_i)} s_i(r)~\ln\left(\frac{\price{A}_i}{\beta_{i,r}(p)}\right)~dr
+\sum_{i \in \cvxloset} y_i \ln \left(\frac{\price{A}_i}{\beta_{i,r_i}(p)}\right)$

\item $E_3 = \sum_{i\in \cvxcfmmset\cup \cvxloset} \price{A}_i~g_i(y_i/\price{A}_i)$
\end{enumerate}

By construction, for all $r$, $\ln \beta_{i,r}(p) \leq \ln(r) + \ln(\price{B}_i)$, or equivalently,
$\ln(1/r) \leq -\ln(\beta_{i,r}(p)) + \ln(\price{B}_i)$.  Furthermore, note that by the constraint (\ref{eq:c1}),
for all assets $\asset{C}$, $\sum_{i:\asset{A}_i=\asset{C}} y_i\ln \price{C} = \sum_{i:\asset{B}_i=\asset{C}} y_i\ln \price{C}$.
Additionally, for all $i\in \cvxloset$, because $y_i \leq \hat S_i \price{A}_i$, it must be the case that $\min(\hat S_i, y_i/\price{A}_i) = y_i/\price{A}_i$.

Using these facts, we get
\begin{align*}
E_3 &= \sum_{i\in \cvxcfmmset\cup \cvxloset} \price{A}_i~g_i(y_i/\price{A}_i) \\
&= \sum_{i\in \cvxcfmmset} \price{A}_i \int_0^{S_i^{-1}(y_i/\price{A}_i)} s_i(r)~\ln(1/r)~dr
+\sum_{i \in \cvxloset} \price{A}_i \min(\hat S_i, y_i/\price{A}_i)\ln(1/r_i) \\
&\leq \sum_{i\in \cvxcfmmset} p_{A_i}\int_0^{S_i^{-1}(y_i/p_{A_i})}s_i(r)\left(-\ln \beta_{i,r}(p) + \ln (\price{B}_i)\right)~dr
+ \sum_{i\in \cvxloset} y_i \left( -\ln \beta_{i, r_i}(p) + \ln(\price{B}_i)\right) \\
&= \sum_{i\in \cvxcfmmset \cup \cvxloset} y_i \ln \price{B}_i
+ \sum_{i \in \cvxcfmmset} p_{A_i}\int_0^{S_i^{-1}(y_i/p_{A_i})}s_i(r)\left(-\ln \beta_{i,r}(p)\right)~dr
+ \sum_{i\in \cvxloset} y_i \left( -\ln \beta_{i, r_i}(p)\right)\\
&= \sum_{i\in \cvxcfmmset \cup \cvxloset} y_i \ln \price{A}_i
+ \sum_{i \in \cvxcfmmset} p_{A_i}\int_0^{S_i^{-1}(y_i/p_{A_i})}s_i(r)\left(-\ln \beta_{i,r}(p)\right)~dr
+ \sum_{i\in \cvxloset} y_i \left( -\ln \beta_{i, r_i}(p)\right) \\
&= \sum_{i \in \cvxcfmmset} p_{A_i}\int_0^{S_i^{-1}(y_i/p_{A_i})}s_i(r)\left(\ln \price{A}_i -\ln \beta_{i,r}(p)\right)~dr
+ \sum_{i\in \cvxloset} y_i \left( \ln\price{A}_i-\ln \beta_{i, r_i}(p)\right)  \\
&= E_2
\end{align*}

For any $i$, define $\hat{r}_i = \frac{\price{A}_i}{\price{B}_i}$.  

For any $r<\hat{r}_i$, $\price{B}_i/\beta_{i,r}(p) = 1/r$, and otherwise $\price{B}_i/\beta_{i,r}(p)=1/\hat{r}_i \geq 1/r$.  
As such, for any $i\in \cvxcfmmset$, 
$\int_0^{S_i^{-1}(y_i/\price{A}_i)} s_i(r)~\ln(1/r)~dr = \int_0^{S_i^{-1}(y_i/\price{A}_i)} s_i(r)~\ln(\price{B}_i/\beta_{i,r}(p))~dr$ if and only if $y_i/\price{A}_i \leq S_i(\price{A}_i/\price{B}_i)$. 

Similarly, for any $i\in \cvxloset$, 
$\price{A}_i \min(\hat S_i, y_i/\price{A}_i)\ln(1/r_i) = 
y_i \left( -\ln \beta_{i, r_i}(p) + \ln(\price{B}_i)\right)$ if and only if 
$y_i = 0$ if $r_i > \hat{r}_i$.  These conditions hold for each $i\in \cvxboth$ if and only if $E_2=E_3$.
Furthermore, observe that $\price{A}_i/\beta_{i,r}(p) \geq 1$ for all $r$, and is equal to $1$ for any $r > \hat{r}_i$.
Rearranging $E_2$ gives:

\begin{dmath*}
E_2 = 
\sum_{i \in \cvxcfmmset} \price{A}_i\int_0^{S_i^{-1}(y_i/p_{A_i})}s_i(r)\ln \Big( \frac{\price{A}_i}{\beta_{i,r}(p)}\Big)~dr
+ \sum_{i\in \cvxloset} y_i \ln \Big( \frac{\price{A}_i}{\beta_{i, r_i}(p)}\Big) \\
\leq \sum_{i\in \cvxcfmmset} \price{A}_i \int_0^\infty  s_i(r)~\ln\Big(\frac{\price{A}_i}{\beta_{i,r}(p)}\Big)~dr
+\sum_{i \in \cvxloset} \price{A}_i \hat S_i \ln \Big(\frac{\price{A}_i}{\beta_{i,r_i}(p)}\Big) \\
= E_1
\end{dmath*}

Observe that for any $i\in \cvxcfmmset$, $\int_0^{S_i^{-1}(y_i/\price{A}_i)}s_i(r)\left(\ln \price{A}_i -\ln \beta_{i,r}(p)\right)~dr
= \int_0^{\infty}s_i(r)\left(\ln \price{A}_i -\ln \beta_{i,r}(p)\right)~dr$
if and only if $y_i/\price{A}_i = S_i(\price{A}_i/\price{B}_i)$.  Additionally, for any $i\in \cvxloset$, if $r_i < \hat{r}_i$, then 
$y_i \left( \ln\price{A}_i-\ln \beta_{i, r_i}(p)\right) = \price{A}_i \hat S_i \ln \left(\frac{\price{A}_i}{\beta_{i,r_i}(p)}\right)$
if and only if $y_i = \price{A} \hat S_i$.  These conditions hold for each $i\in \cvxboth$ if and only if $E_1=E_2$.
As such, $E_1 \geq E_3$, and the inequality is tight if and only if $y_i/\price{A}_i \in S_i(\price{A}_i/\price{B}_i)$ for all $i\in \cvxboth$.
But the objective of the convex program is $E_1-E_3$, proving the theorem.
\end{proof}

\subsection{Rationality of Convex Program}
\label{sec:rationality}

The program of \cite{devanur2016rational} always has a rational solution.  Our program may not, given certain CFMMs.
However, rational solutions exist when CFMMs belong to the following class.

\begin{theorem}
\label{thm:rationality}

If the expression $\price{A}_i S_i(p_{A_i}/p_{B_i})$ is a piecewise-linear, rational function of $\price{A}_i$ and $\price{B}_i$ for all $i$,
on the range where $S_i(\cdot) > 0$ and $S_i(\cdot)$ is monotonic,
then the convex program has an optimal rational solution.
\end{theorem}

\begin{example}
\label{ex:densityfns}
The supply function of the constant product CFMM with reserves $(x_{\asset{A}}, x_{\asset{B}})$ under Trading Rule U  in a batch exchange is $\max(0, (rx_{\asset{A}} - x_{\asset{B}}) / (2r))$.

\end{example}

However, this convex program cannot always have rational solutions; in fact, there exist simple examples using
natural utility functions for which the program has only irrational solutions.

\begin{example}
\label{ex:lmsr_irrational}

There exists a batch instance containing one CFMM based on the logarithmic market scoring rule and one
limit sell offer that only admits irrational equilibria under Trading Rule U.
\end{example}

\section{Conclusion and Open Problems}
\label{sec:conclusions}

Constant Function Market Makers have become some of the blockchain ecosystem's most widely used exchange systems.
  Batch trading has been proposed and deployed to combat some shortcomings of decentralised and traditional exchanges.  
Different implementations in practice have taken substantially different approaches to how these two innovations should interact.
 
%
We develop an axiomatic framework and describe several desirable properties of the combined system. While several pairs of these properties cannot be guaranteed simultaneously, we are able to provide algorithms that achieve subsets of these properties.

Finally, we construct a convex program that computes equilibria on batches containing CFMMs that each
trade only two assets. For many natural classes of CFMMs, the objective of this convex program
is smooth, and the program has rational solutions.

\textbf{Open Problems:}  CFMM fees pose an important question for batch exchange deployments.  Fair compensation for liquidity provision in batch exchanges may require, for example, quantitative models and a careful analysis of trade data.
A convex program for exchange market equilibria for general WGS utilities (not just on two assets) continues to be an open problem. 

In this work, we find Pareto optimal solutions only in a special case of two-asset CFMMs, when the CFMMs are price-coherent pre-batch. Designing algorithms for finding Pareto optimal solutions in multi-asset batches in polynomial time is an important open problems. We show that such an algorithm will necessarily have to give up LCR for CFMMs (Theorem~\ref{thm:lcr-po}) which will result in unpredictable trade sizes for CFMMs given batch prices.

Iterative algorithms like T\^atonnement \cite{codenotti2005market}, if implemented naïvely, would require for each demand query an iteration over every CFMM. \citet{ramseyer2021speedex} use a preprocessing step to compute the aggregate response of a group of limit sell offers in logarithmic time; identifying a subclass of CFMM trading functions in which the aggregate demand response of a large group of CFMMs can be computed efficiently (in sublinear time) would be of practical use.

\begin{acks}
This work is supported by the Future of Digital Currency Initiative at Stanford University, the IOG Research Hub, and the Office of Naval Research, award number N000141912268.  Thanks to Sahasrajit Sarmasarkar for helpful discussions and technical assistance.
\end{acks}

\bibliographystyle{ACM-Reference-Format}
\bibliography{literature}

\appendix
\section{CFMM examples} \label{app:cfmm_examples}
\begin{example}[CFMM Trading Functions]
\label{ex:cfmms}
~
\begin{itemize}
	\item

	The widely-known decentralized exchange, Uniswap, uses the Constant Product CFMM \cite{uniswapv2},
	which sets $f(x)=\prod\limits_{A\in \mathcal{A}} x_A,$ where $\mathcal{A}$ is the set of assets that the CFMM trades in. 
 \item	The Constant Sum CFMM uses the trading function $f(x) = \sum\limits_{A\in \mathcal{A}} r_A x_A $ for set of assets $\mathcal{A},$ and positive constants $\{r_A\}_{{A\in \mathcal{A}}}$ which represent the prices of the respective assets.%

	\item
	The Logarithmic Market Scoring Rule \cite{hanson2007logarithmic} corresponds to a CFMM with trading function $f(x) = |\mathcal{A}|- \sum\limits_{A\in \mathcal{A}}e^{-x_A}$
	 \cite{univ3paradigm}.

\end{itemize}
\end{example}

\section{Path Independence Property} \label{sec:pi}

We here discuss a property of the batch exchange (and not of market equilibria like the previous properties), defined when there is a single limit order-based trader.
Blockchain systems often have restricted throughput capabilities, and it is important to design mechanisms which do not incentivize traders to submit multiple small orders instead of a single big order\footnote{It is a standard practice in finance that investors and traders with large orders submit their orders in smaller parts to receive a better average price since market liquidity at a time is limited \cite{alfonsi2010optimal,cartea2016closed}. Another possible reason to do so is that the traders do not want to disclose their private information, which is reflected in their order size \cite{garriott2020trading}. We do not aim to restrict this practice. We wish to restrict the incentives for breaking down orders even when the available liquidity in the market does not change.}. Towards this, we define the path-independence property, which is inspired by the standalone operation of CFMMs.

\begin{definition}[Path Independence] \label{def:pi}
    An arbitrary group of CFMMs $\mathcal{C}$ participates in two consecutive batches and does not modify between the batches. Suppose a single trader exists and wants to sell some units of an asset $A \in \mathcal{A}$ in exchange for asset $B \in \mathcal{A}$ via market sell orders (that is, limit sell orders with limit price zero). 
    
    In a path-independent batch exchange, the amount of asset $B$ they receive on splitting the units of $A$ they sell into the two batches is independent of the split.
\end{definition}

This definition trivially extends to multiple batches and not just two.
Standalone CFMMs satisfy path independence \cite{angeris2020improved}. However, it is surprisingly difficult to satisfy in batch exchanges, except for a narrow case of Theorem~\ref{thm:s-pi}.  
We restate and prove Theorem~\ref{thm:s-pi} here.

\begin{theorem}\label{thm:s-pi}
  Batch exchanges with Trading Rule S for CFMMs satisfy path independence when there is only one CFMM in the batch trading in only two assets.
\end{theorem}
\begin{proof}

Under Trading Rule S, $f(z) = f(x)$ where $x$ and $z$ are the CFMM's pre-batch and post-batch states, respectively. For any amount of $A$ that the market order trader sells, the amount of $B$ in the CFMM is specified by the trading function preservation and therefore the amount of $B$ that the trader receives is also only a function of the net amount of $A$ sold by the trader.
\end{proof}

The path independence property, although defined in a restricted setup, provides a distinction between Trading Rules U and S. Even in the simple setup of a single CFMM and a single limit-order-based trader, batch exchanges implementing Trading Rule U do not satisfy path independence, as shown in the following example.

\begin{example}[Trading Rule U does not satisfy path independence with one CFMM]
    Consider a batch instance with a single CFMM with $f(x) = x_A x_B$ and pre-batch state $\tilde x_A = 1, \tilde x_B = 4.$ 
    
    If a trader submits a market sell order for 2 units of $A,$  they receive $1.6$ units of $B.$ (The new CFMM state: $x_A = 3, x_B = 2.4$, has a spot price of $0.8$ B per A, which matches the batch price offered).

    If the same trader first submits a market sell order for $1$ unit of $A,$ they receive $4/3$ units of B for it. The new CFMM state is $x_A =2, x_B =8/3$. Now if they make another market sell order for $1$ unit of $A,$ they receive $2/3$ units of $B$ for it, with the final CFMM state of $x_A = 3, x_B = 2$.  The overall trade is $2$ A for $2$ B, which is strictly better than what they received on trading in full in one batch.
\end{example}

Although simple and intuitive in the case of 2-asset batches, the path independence property is surprisingly difficult to satisfy. Even Trading Rule S does not satisfy it when more than one CFMMs are present, even if the CFMMs are price-coherent pre-batch, as in the following example.

\begin{example}
      CFMM $C_1$ trades in assets $A$ and $B,$ uses $f(x) = x_Ax_B$ and starts with  $(x_A = 10, x_B = 1).$

    CFMM $C_2$ also trades in assets $A$ and $B,$ uses $f(x) = x_A + 10 x_B$ and starts with $(x_A = 10, x_B = 0).$

If a trader sells 1 B, they receive 10 A; CFMM $C_1$ makes no trade, and CFMM $C_2$ trades to state $(x_A = 0, x_B = 1)$.
Then, if they sell another unit of B, CFMM $C_1$ trades to final state $(x_A = 5, x_B = 2)$ and CFMM $C_2$ makes no trade. Overall the trader receives 15 units of A for 2 units of B.

But if they sell 2B together, the price is (3/20) A per B;
CFMM $C_1$ trades to final state $(x_A = 1.5, x_B=0)$ and 
CFMM $C_2$ trades to final state  $(x_A=20/3, x_B=3/2)$.
The trader receives 40/3 A  for 2 B. 
\end{example}

This example shows that path independence is generally in conflict with the interests of the CFMM, and it can be satisfied only in restricted cases. Further investigation into the desirability and achievability of path independence in batch exchanges in left for future work.


\section{Beyond Market Equilibria Solutions}\label{sec:beyond-market}

Here, we study the case where the CFMMs can post-process their state after the batch trade and before being opened to standalone operation or the next batch.
CFMMs can restore price coherence through a careful post-processing step. This can, for example, be facilitated by the batch operator. A simple method is to run a ``phantom batch'' with no limit orders and an equilibrium computation algorithm implementing Trading Rule U for all CFMMs. By doing so, the batch exchange may retain some desirable properties of solutions that do not guarantee PC in the \textit{main batch} (for example Trading Rule S), and then also attain PC before subsequent operations by running the phantom batch. This 2-step solution seems to violate our impossibility result of Theorem~\ref{thm:pc-po}, but it does not! This solution violates the axiom of uniform prices (Axiom~\ref{axiom:batch_valuations}) since the asset prices in the phantom batch may deviate from those in the main batch. Therefore, the overall trades received by the CFMMs will be different from that implied by the main batch asset prices.
  Our second result in this section is even more interesting from a practical perspective, in the sense that the CFMMs will not need to trade at potentially bad prices in the post-processing step. For this, we study a class of LCR trading rules with the following property.
  
  Batches which implement a trading rule from this class for all CFMMs assure that the CFMMs can \emph{locally} post-process and attain a state of JPD. Importantly, this local post-processing strategy does not entail changing the CFMM trading function. 
It simply consists of capturing a part of the CFMM reserves as `profit' and removing it from the market-making pool. In the process, the CFMM reaches a state where the value of the trading function is equal to its pre-batch value.
The caveat is that this 2-step process violates asset conservation since assets are effectively removed from the system at the end of it. here we explain these trading rules for 2-asset CFMMs but it can easily be extended to CFMM which trade in multiple assets.

We first define Trading Rules E and F.

 For a CFMM trading in assets A and B, under Assumption~\ref{ass:cfmm}, there is an injective map from the spot price to the CFMM state. Motivated by the `rebalancing' strategy of \citet{milionis2022automated}, consider a CFMM trading rule which buys as much of asset A as it would hold `on the curve' when the spot price equals the batch price.

 \begin{definition}  \label{def:a}
  For the initial CFMM state $x$ and its trading function $f$, denote the map from the spot price $p$ to the amount of asset A in the CFMM state by $\mathbb{A}(x,f,p),$ i.e.,
      $$\mathbb{A}(x,f,p) = \{z_A | f(z) = f(x); \frac{\partial f}{\partial x_A}(z)/\frac{\partial f}{\partial x_B}(z) = p\}.$$ 
 \end{definition}

 \begin{definition}\label{def:b}
 For the initial CFMM state $x$ and its trading function $f$, denote the map from the spot price $p$ to the amount of asset B in the CFMM state by $\mathbb{B}(x,f,p),$ i.e., 
     $$\mathbb{B}(x,f,p) = \{z_B | f(z) = f(x); \frac{\partial f}{\partial x_A}(z)/\frac{\partial f}{\partial x_B}(z) = p\}.$$  
 \end{definition}

For strictly quasiconcave CFMM trading functions, $\mathbb{A}(x,f,p)$ and $\mathbb{B}(x,f,p)$ are singleton. For non-strict quasiconcave CFMM trading functions, we can use any arbitrary elements of the sets defined above for the rest of the discussion.
This enables us to define two CFMM trading rules.

\begin{definition} [Trading Rule E] The allocation obtained by a CFMM in equilibrium is such that it gets as much A as it holds on the pre-batch level curve of $f$ when the spot price equals the batch price. That is:
    $F_E(x, f, p) = z^E$ where $z^E_A = \mathbb{A}(x,f,p_A/p_B)$ and $z^E_B = x_B + (p_A/p_B)(x_A -\mathbb{A}(x,f,p_A/p_B)) .$
\end{definition}

 Similarly, we also have Trading Rule F, where the roles of the assets are reversed.

\begin{definition} [Trading Rule F]    The allocation obtained by a CFMM in equilibrium is such that it gets as much B as it holds on the pre-batch level curve of $f$ when the spot price equals the batch price. That is:
    $F_F(x, f, p) = z^F$ where $z^F_A = x_A + (p_B/p_A)(x_B -  \mathbb{B}(x,f,p_A/p_B))  $ and $z^F_B = \mathbb{B}(x,f,p_A/p_B).$

\end{definition}
    

Observe that both Trading Rules E and F satisfy LCR. We illustrate these trading rules in Figure~\ref{fig:rules}.
We define a class of LCR trading rules for 2-asset CFMMs -- \emph{Strict-Surplus} Trading Rules -- as a parameterized interpolation between Trading Rules E and F.

\begin{definition} [Strict-Surplus Trading Rules]  \label{def:sstr}
 The allocation obtained by a CFMM in market equilibrium under Strict-Surplus Trading Rule with parameter $\theta \in [0,1]$ is  $F_\theta(x, f, p) = z^{\theta}$ where
  $z^{\theta}_A = \theta z^E_A + (1-\theta) z^F_A~$ and,
  $~z^{\theta}_B = \theta z^E_B + (1-\theta)z^F_B.$
\end{definition}

Strict-Surplus Trading Rules have a special property captured in the following theorem.

\begin{theorem} \label{thm:brf-postprocess}
      A CFMM with trading function $f$ and initial state $x$ trading in assets A and B with a Strict-Surplus Trading Rule with parameter $\theta$ can, post-batch, extract a non-negative surplus $e^{\theta}$ where 
      
    $e^{\theta}_A = z^{\theta}_A - \mathbb{A}(x,f,p_A/p_B),~$ and    
    $~e^{\theta}_B = z^{\theta}_B - \mathbb{B}(x,f,p_A/p_B),$
    
    and reach a state $z^{\prime}$ where $z^{\prime}_A =  \mathbb{A}(x,f,p_A/p_B)$ and $z^{\prime}_B =  \mathbb{B}(x,f,p_A/p_B)$. 

      After surplus extraction, the trading function becomes equal to its pre-batch value, i.e., $f(z^{\prime}) = f(x)$.

    If all CFMMs in the batch extract their respective surplus, the final state satisfies JPD.

\end{theorem}

\begin{proof}
 By the strict quasiconcavity of the trading function, both Trading Rules E and F trade "above" the curve, and that $z_B^E > \mathbb{B}(x,f,p_A/p_B)$ and $z_A^F > \mathbb{A}(x,f,p_A/p_B)$. This implies that the surplus $e^{\theta}$ is non-negative.
JPD is ensured by the definitions of $\mathbb{A}(x,f,p_A/p_B)$ and $ \mathbb{B}(x,f,p_A/p_B)$.
\end{proof}

This result augments our design space significantly for implementing JPD beyond simply applying Trading Rule U. Informally, Trading Rule F, which is a Strict Surplus trading rule, may provide more liquidity to the market than Trading Rule U in some cases. Observe from Fig~\ref{fig:rules} that in this example, Trading Rule F implies a larger trade for the CFMM than Trading Rule U at the same batch price.
\section{Trading Rule U and Parallel Running} 
\label{sec:u_only_prf}
We show here that in the case where a market is not especially sparse, Trading Rule U is the only trading rule that eliminates parallel running.
We say that a batch is sufficiently \textit{large} if there is a market sell order trading between every asset pair that can be traded on 
a CFMM in the batch.

\begin{definition}[Large Batch] \label{def:large_batch}
 A batch instance has a set $\mathcal{C}$ of CFMMs. Let CFMM $c \in \mathcal{C}$ trade in the set of assets $\assetset_c.$ Denote the set of asset pairs that can be traded on some CFMM in $\mathcal{C}$ by  $Pairs(\mathcal{C}) = \{(A, B) \vert \exists c \in \mathcal{C}, \text{s.t.~} (A, B) \in \assetset_c\}.$ 
 
 We say that a batch instance is \textit{large} if there exists a market sell order for every pair in $Pairs(\mathcal{C})$. 
\end{definition}

We first define a property of equilibrium computation algorithm which is very natural but is required for our result in this section.
\begin{definition}[Split Invariance] \label{def:split_indiff}
    Consider two batch instances.
\begin{enumerate}
    \item There is a set $\mathcal{C}$ of CFMMs and a set $\mathcal{L}$ of limit or market sell orders.
    \item There is a set $\mathcal{C}$ of CFMMs and a set $\mathcal{L} \cup \lbrace \tilde{l}_1, \tilde{l}_2 \rbrace \setminus \lbrace l\rbrace $ 
    of limit or market sell orders, where $l = (\asset{A},\asset{B},k,r) $, $ \tilde{l}_1 = (\asset{A},\asset{B},t,r)$ and $ \tilde {l}_2 = (\asset{A},\asset{B},k-t,r)$, where $\asset{A}, \asset{B}$ are any two assets, $k>t>0$, and $r\geq 0$.
\end{enumerate}
    An equilibrium computation algorithm is \textit{split-invariant} if it always computes the same asset prices in the above two cases.
\end{definition}

Split invariance is a natural condition for algorithms whose output depend on the aggregate behavior of the limit orders, and never on the exact sizes of individual limit orders.  Many natural algorithms (such as \cite{codenotti2005market,bei2019ascending}) satisfy this condition.

We have the following result.

\begin{theorem}
\label{thm:jpd_necessary}
    In a batch exchange using a split-invariant equilibrium computation algorithm,
    JPD is necessary to ensure that parallel running is impossible on large batch instances.
\end{theorem}

\def\qab{\hat{q}_{\asset{A},\asset{B}}}
\def\pab{\hat{p}_{\asset{A},\asset{B}}}

\def\batch{\begingroup \mathcal{X} \endgroup}
\def\batchp{\batch ^ \prime}
\def\batchpp{\batch ^ {\prime \prime}}

\begin{proof}
Suppose that JPD is not guaranteed by a batch exchange, and that there is some large batch input $\batch$ where JPD is not satisfied.

As such, there must exist some CFMM $c$ in $\batch$ trading in some assets $\asset{A}$ and $\asset{B}$ 
for which the post-batch CFMM spot exchange rate $\qab $ differs from the equilibrium batch exchange rate $\pab=\frac{\price{A}}{\price{B}}$.
Without loss of generality, suppose that $\qab < \pab$, and denote 
$\delta_{\asset{A},\asset{B}} = \pab - \qab >0$.
Let $l$ be the market order that sells $A$ for $B$ in the batch, and denote the amount of $\asset{A}$ sold by this order by $s_{\asset{A},\asset{B}}$.

\def\epsab{\varepsilon_{\asset{A},\asset{B}}}

Define $\epsab$ to be the amount of $\asset{A}$ sold by CFMM $c$ as its spot price moves from $\qab$ to $\pab$ when operating as a standalone CFMM.  It must be the case that $\epsab> 0$ (or else the above scenario would not violate JPD).

Now, consider an alternate batch instance $\batchp$, which differs from $\batch$ only in that the order $l$ is replaced with two market orders
 $\tilde l_1$ and $\tilde l_2,$ of sizes $\epsab$  and  $s_{\asset{A},\asset{B}} - \epsab$ respectively.
Because the batch uses a split-invariant equilibrium computation algorithm, the exchange rate in $\batchp$ between $\asset{A}$ and $\asset{B}$ is the same
as that computed in $\batchp$, namely, $\pab$.

Now consider an additional, alternative batch instance $\batchpp$, which is equal to $\batchp$ except that it does not include order $\tilde{l}_1$.
We claim that $\batchpp$ admits a parallel-running opportunity.  A parallel runner (who has complete information of $\batchpp$ and the equilibrium computation algorithm) can insert a market sell order selling $\epsab$ units of $\asset{A}$ for $\asset{B}$, thereby creating batch $\batchp$.
After the batch, CFMM $c$'s spot price will be $\qab < \pab$, and the parallel runner can sell some amount of $\asset{B}$ less than $\epsab \pab$ 
to buy back $\epsab$ units of $\asset{A}$ from the CFMM.
\end{proof}

Note that Definition \ref{def:large_batch} is sufficient but not necessary for the proof of Theorem \ref{thm:jpd_necessary}.
Informally, many natural market scenarios admit similar parallel-running scenarios when not using JPD.

\section{Proof of Proposition~\ref{prop:s-pareto-2asset}}
\begin{proof}

Let assets $A$ and $B$ be traded on the exchange. 
Denote the market equilibrium obtained by implementing Trading Rule S for all CFMMs by $E$. 
By pre-batch price coherence, either all CFMMs sell $B$, or all sell $A$.
Without the loss of generality, say the CFMMs sell asset $B$ in equilibrium $E$. 

This implies that there exists at least one limit order which buys asset $B$. An alternate equilibrium with a higher price of asset $B$ cannot Pareto dominate equilibrium $E$ since then this limit order will have a lower utility.

Recall that the utilities of the limit orders are fixed given the price under Axiom~\ref{axiom:br}.
Keeping the price of $B$ constant, the utilities do not change, and we cannot find a Pareto improvement over $E.$

Consider lowering the price of $B$. Now we can hope to generate a Pareto improvement only when no limit order sells $B$, since any limit order selling $B$ will have a lower utility. 

For higher utility, the limit orders buying $B$ consume strictly more $B$ than in equilibrium $E$. However, the CFMMs, cannot sell any more of $B$ at a lower price by the quasi-concavity of the trading function $f$ and the fact that the CFMM is ``on the curve'' in Trading Rule S. Therefore such a utility improving market equilibrium is not possible.

With a simple example, we now show that Trading Rule S is the only LCR CFMM trading rule with this property. Consider a batch instance with a \textit{market} sell order for selling asset $A$ for $B$. Suppose for some initial state $x$, trading function $f$, and batch price $p$, the batch exchange does not implement Trading Rule $S$. Let the CFMM make trade $z-x$ under Trading Rule S at price $p$. Without loss of generality, $z_A - x_A > 0.$ Let the market sell order sell $z_A - x_A$ units of A. The unique Pareto optimal solution is to give $x_B- z_B$ units of B to the limit order by the strict quasiconcavity of $f$. Providing any more $B$ to the market sell order is infeasible and fewer is Pareto dominated by Trading Rule S. 
\end{proof}
\section{Proof of Theorem~\ref{thm:univ3-ppc}}
\begin{proof}
   Denote the asset prices which give the initial spot prices of the CFMMs by $\{q_A\}_{A \in \mathcal{A}}.$ These exist by pre-batch PC. Denote the batch equilibrium prices by $\{p_A\}_{A \in \mathcal{A}}$ and the asset prices which give the post-batch spot prices of the CFMMs by $\{\hat q_A\}_{A \in \mathcal{A}}.$ These must exist under PPC.
   Since each CFMM here trades in only two assets (Trading Rule S is defined only for 2-asset CFMMs), its initial spot price is $q_{AB} = q_A/q_B$ for assets $A$ and $B$ that it trades. Similarly, denote the batch price in $A$ and $B$ by $p_{AB}.$

For a given CFMM trading function, recall that Trading Rule S gives a map $F_S$ (Definition~\ref{def:trs}) from the initial state and the batch asset prices to the final state. Observe that this map depends on $\{p_A\}_{A \in \mathcal{A}}$  only via the batch price $p_{AB}.$ 
Since the CFMM trading function value is invariant under Trading Rule S, we can represent Trading Rule S as a map from the initial spot price and batch price to the final spot price. 
   Denote this map for trading function $f$ by $ g_f(q_{AB}, p_{AB}) \rightarrow \hat q_{AB}.$ 

   The map $g_f(\cdot,\cdot)$ must satisfy the following properties when we have PPC under Trading Rule S.

   \begin{enumerate}
       \item \textit{Reflexivity.} That is, $g_f(\rho,\rho) = \rho$ for all $\rho \in [0,\infty).$ This is required by the assumption of the structure of trading functions.
       \item \textit{Involution.} That is, $g_f(q,p) = \hat q \implies g_f(\hat q,p) = q$ for all $q,p,\hat q \in [0,\infty).$ This is because, under Trading Rule S, the CFMM returns to the initial state after two consecutive batches if the batch price is the same in both batches.
        \item \textit{PPC.} For a group of CFMMs, construct a graph with the CFMMs as nodes and an undirected edge between two nodes if they trade at least one common asset. For any cycle $\mathbb{C}$ of CFMMs in this graph, see that $\prod_{i \in \mathbb{C}} p_{A_iB_i} = 1$ by arbitrage-freeness of batch valuations (here CFMM $i$ trades in asset $A_i$ and $B_i$, $A_{i+1} = B_{i},$ and $B_{|\mathbb{C}|} = A_1$). Also $\prod_{i \in \mathbb{C}} q_{A_iB_i} = 1$ by pre-batch PC. PPC requires that $ \prod_{i \in \mathbb{C}} g_{f_i}(q_{A_iB_i}, p_{A_iB_i}) = 1.$
        
   \end{enumerate}

Let $\mathfrak{F}$ be the class of CFMM functions for which the result holds (that is, if all $f_i \in \mathfrak{F}$, then PPC holds under Trading Rule S). We are interested in finding this class of CFMMs. In particular, since the batch can have arbitrary CFMMs, the PPC condition mentioned above should hold for any collection of CFMMs from class $\mathfrak{F}$.

   We first use a weaker constraint -- that PPC must hold when all $f_i$ are identical, and there are only two CFMMs. For the solution obtained for this relaxed problem, we show that the general form of the PPC constraint is also satisfied for any number of CFMMs.

The relaxed form of the constraints are as follows:
\begin{enumerate}
    \item $g_f(\rho,\rho) = \rho$ for all $\rho \in [0,\infty).$ [Reflexivity]
    \item $g_f(q,p) = \hat q \implies g_f(\hat q,p) = q$ for all $q,p,\hat q \in [0,\infty).$ [Involution]
    \item $g_f(q,1) \cdot g_f(1/q,1) = 1.$ [Special case of PPC]
\end{enumerate}

   From Claim \ref{main_claim}, we obtain that the only continuous functions that satisfy $g_f(q,1) = \hat q \implies g_f(\hat q,1) = q $ for all $q, \hat q$ and $g_f(q,1) \cdot g_f(1/q,1) = 1$ for all $q$ are $g_f(q,1) = q$ and $g_f(q,1) = \frac{1}{q}.$

   Further, the condition of of reflexivity implies that for general $p$, the only solutions are $g_f(q,p) = q$ and $g_f(q,p) = \frac{p^2}{q}.$

   The case of $g_f(q,p) = q$ corresponds to the constant sum CFMM whose spot price is invariant to the CFMM state. Observe that this CFMM is in the CLCP class (Definition~\ref{def:clcp}) and corresponds the case where the liquidity is concentrated at one price.

   We now solve for the trading function corresponding to $g_f(q,p) = \frac{p^2}{q}.$

Let a CFMM trade in assets $A$ and $B$, and its reserves in each asset be $x_A$ and $x_B$. 
When CFMM trading function value $f(x)$ is invariant, it gives an injective map from $A$ in the reserves to $B$ in the reserves. Denote this map by $B_f(x_A)$ for a fixed level curve of the CFMM trading function $f$. In this notation, the spot price of the CFMM is $\frac{dB_f}{dx_A}.$

   For two points on a level curve, $(x_{A0},x_{B0})$ and $(x_{A},x_{B}),$  the condition   $g_f(q,p) = \frac{p^2}{q}$ translates to
   \begin{equation}
       \frac{dB_f}{dx_A}\bigg\rvert_{x_{A0}}  \cdot  \frac{dB_f}{dx_A} = \left(  \frac{x_{B} - x_{B0}}{x_{A} - x_{A0}} \right)^2 
   \end{equation}
   Taking the point $(x_{A0},x_{B0})$ as fixed and denote $ \frac{dB_f}{dx_A}\Big\rvert_{x_{A0}}$ by constant $-c$ for $c>0$. The differential equation yields the following indefinite integral.
     \begin{equation}
     -c \int \frac{dB_f}{(x_{B1} - x_{B0})^2 }   = \int   \frac{dx_A}{(x_{A1} - x_{A0})^2}
   \end{equation}
   Further solving gives 
      \begin{equation}
     \frac{-c}{(x_{B} - x_{B0}) }   =  \frac{1}{(x_{A} - x_{A0})} + c_1,
   \end{equation}
   where $c_1$ is the constant of integration. For $x_B \neq x_{B0}$ and $x_A \neq  x_{A0}:$
     \begin{equation}
         -c(x_{A} - x_{A0})   = (x_{B} - x_{B0}) + c_1(x_{A} - x_{A0})(x_{B} - x_{B0})
   \end{equation}
   Upon rearrangement of the terms, this yields the CLCP trading function form.

   So far, we have shown that membership in the CLCP class is a necessary condition for having the required property in the theorem statement. 
   
   We show that it is also sufficient.

   Recall the functional form of the CLCP class: $f(x) = (x_A+\hat x_A) (x_B + \hat x_B).$
   Wlog, on a level curve where $f(x) =1,$ the spot price is given by:

   $$ q = \frac{1}{(x_A + \hat x_A)^2}.$$

   The batch price when the CFMM state moves from $x_1$ to $x_2$ is $ p = -\frac{x_{B1} - x_{B0}}{x_{A1} - x_{A0}}.$

   Observe that $x_{B1} - x_{B0} = \frac{1}{ (x_{A1}+\hat x_A) }  - \frac{1}{ (x_{A0}+\hat x_A) } = \frac{(x_{A0}- x_{A1})}{(x_{A0}+\hat x_A)(x_{A1}+\hat x_A)}  .$

Plugging this into the expression for the batch price $p,$ we obtain that $q_1 q_0 = p^2,$ where $q_1$ and $q_0$ are the spot prices at $x_1$ and $x_0$ respectively.

   Since all CFMMs are from the CLCP class, and since $\prod_i p_i = 1$ over any trading cycle by the uniform prices axiom on batch exchanges, and $\prod_i q_{0,i} = 1$ over any trading cycle by pre-batch PC, we also obtain $\prod_i q_{1,i} = (\prod_i p_i)^2 / (\prod_i q_{0,i}) = 1$ over any trading cycle, thus implying PPC.

   This completes the proof.
\end{proof}

\begin{claim}\label{main_claim}
      Consider any $g(\cdot) $ satisfying the following properties.
        \begin{itemize}
            \item $g(x) g(\frac{1}{x}) = 1$ 
            \item $g(g(x)) = x$.
              \item Given any closed and bounded interval $\mathcal{B}$ in $(0,\infty)$, the number of times $xg(x)-1$ changes sign is bounded. Note that the bound can be a function of the interval $\mathcal{B}$.

        \item The function $g(\cdot)$ is continuous.
        \end{itemize}

        Then the solution to $g(\cdot)$ can be either $g(x) = \frac{1}{x}$ or $g(x) = x$.
\end{claim}

\begin{proof}
    Observe that $g(1)=1$.
    
    We first show that if $x<1$, either $g(x)=x$ or $g(x)>1$.
    Suppose for contradiction that there exists $\hat{x}<1$ with $g(\hat{x})\neq \hat{x}$ and $g(\hat{x})<1$.
    Define $h(x)=x-g(x)$.  Then, without loss of generality,
    $h(\hat{x}) >0$ and $h(g(\hat{x})) <0$, so, by the intermediate value theorem,
    there exists $y$ with $h(y)=0$ and thus $g(y)=y$, with $y$ between $\hat{x}$  and $g(\hat{x})$.

    Furthermore, because $g(\cdot)$ is a continuous bijection (because $g(g(x))=x$), $g(\cdot)$ must be monotone.  As such, we must have that $g(\cdot)$ is decreasing at $\hat{x}<1$, which contradicts the fact that $g(1)=1$.

    Next, consider the following equation, which we will prove later.
    \begin{equation}{\label{eq:sup_set}}
        \{x | xg(x) >1, 0< x\leq 1\} = \emptyset
    \end{equation}

    Equation \eqref{eq:sup_set} shows that $xg(x)\leq 1$ for $x<1$.  Suppose for contradiction that there exists 
    $\hat{x}$ with $\hat{x}g(\hat{x})\neq 1$ and $g(\hat{x})\neq \hat{x}$, and let $\hat{y}=g(\hat{x})$.
    From the conditions of the claim, we get $g(\frac{1}{\hat{y}})=\frac{1}{g(\hat{y})}=\frac{1}{\hat{x}}.$  Furthermore, $\hat{y}>1$, so $\frac{1}{\hat{y}}g(\frac{1}{\hat{y}}) \leq 1$ by Equation \ref{eq:sup_set}.
But $\frac{1}{\hat{y}}g(\frac{1}{\hat{y}})=\frac{1}{\hat{x}\hat{y}}>1$,
as $\hat{x}\hat{y}=\hat{x}g(\hat{x})<1$ by assumption, which is a contradiction.
Thus, $xg(x)=1$ for all $x<1$, and all that remains is to prove Equation \eqref{eq:sup_set}.

Proof of Equation \eqref{eq:sup_set}: 

    Assume the following for contradiction.
            
            \begin{equation}{\label{eq:supremum_assumption}}
                \sup_{ x \in (0,1]} \{x | xg(x) >1 \} = x_0>0
            \end{equation}

            Because $xg(x)$ is continuous, it must be the case that $x_0g(x_0)=1$.

%
            Now Equation \eqref{eq:supremum_assumption} implies that 
            $$\forall \epsilon >0 \text{ } \exists x \in (x_0-\epsilon,x_0) \text{ s.t. } xg(x) >1$$

            Consider some $\epsilon>0$, choose $\tilde{x}(\epsilon) \in (x_0-\epsilon,x_0)$ satisfying $g(\tilde{x}(\epsilon))\tilde{x}(\epsilon)>1$, and denote $\tilde{y}(\epsilon) = g(\tilde{x}(\epsilon))$.  
            Thus, $g(\frac{1}{\tilde{y}(\epsilon)}) = \frac{1}{\tilde{x}(\epsilon)}$, which implies that $\frac{1}{\tilde{y}(\epsilon)} g(\frac{1}{\tilde{y}(\epsilon)}) <1$.
            Observe that $\frac{1}{\tilde{y}(\epsilon)}<\tilde{x}(\epsilon)$. This further implies that $xg(x)$ changes sign in interval $\left(\frac{1}{\tilde{y}(\epsilon)},\tilde{x}(\epsilon)\right)$. 

            Now observe that $\lim_{\epsilon \to 0} \tilde{y}(\epsilon) = g(\lim_{\epsilon \to 0} \tilde{x}(\epsilon)) = g(x_0) = \frac{1}{x_0}$ and thus, $\lim_{\epsilon \to 0} \frac{1}{\tilde{y}(\epsilon)} = x_0$. This limit on $\tilde{y}(\epsilon)$ and the statement in the previous paragraph imply that for every $\zeta>0$, we have $xg(x)$ changing sign in $(x_0-\zeta,x_0)$. This is a contradiction as number of times $xg(x)$ changes sign in any bounded interval containing $x_0$ is unbounded.

\end{proof}

\section{Existence of Positive Equilibrium Prices}

\begin{assumption}
\label{ass:nonsatiated}
For every asset $\asset{A}$ and every set of prices $p$,
if $p_{\asset{A}}=0$,
then there exists at least one agent who always has positive marginal utility for $\asset{A}$ (no matter how much $\asset{A}$
the agent purchases).
\end{assumption}

We also make a standard set of assumptions (e.g. as in \cite{arrow1954existence}) on utility functions of agents in an Arrow-Debreu exchange market.

\begin{assumption}
\label{ass:utility}
A utility function $u(\cdot)$ satisfies the following properties.
\begin{enumerate}
  \item[1] $u(\cdot)$ is continuous.
  \item[2] For all endowments $x$, there exists $x^\prime$ with $u(x) < u(x^\prime)$.
  \item[3] For any $x, x^\prime$ with $u(x) < u(x^\prime)$ and any $0 < t < 1$, 
  $u(x) < u(tx + (1-t)x^\prime)$.
\end{enumerate}
\end{assumption}

\begin{lemma}
\label{lemma:eq_exists}
If Assumption \ref{ass:nonsatiated} holds in some Arrow-Debreu exchange market consisting of CFMMs with supply functions
satisfying Assumption ~\ref{ass:functionalform} and agents with utility
functions satisfies Assumption \ref{ass:utility},
then there always exists an equilibrium of that market and every equilibrium has a positive price on every asset.
\end{lemma}

\begin{proof}
Let $\hat{e}=(\Sigma_ie_i) + (1,...,1)$ be a vector of assets strictly larger than the total amount of each asset available in the market,
and let $E=\lbrace x \vert 0 \leq x \leq \hat{e}\rbrace$.  Let $P$ be the price simplex on $\assetset$ 
(P=$\lbrace p\vert \Sigma_{\assetset} p_{\asset{A}} =1 \rbrace$).

Consider the following game.  There is one player for each agent in the exchange market and one ``market'' player,
for a combined state space of $E\times...\times E\times P$.  Clearly, this state space is compact, convex, and nonempty.

 Given the set of prices $p$,
 each agent player picks a utility-maximizing set of goods $x_i$
subject to resource constraints
(specifically, for each agent $i$, $x_i$ maximizes $u_i(\cdot)$ subject to $p\cdot x\leq p\cdot e_i$ and $x_i\in E$),
receiving payoff $u_i(x_i)$.
The market player chooses a set of prices $\hat{p}$, for payoff $\Sigma_i (e_i-x_i)\cdot p$.

Define $A_i(p)=\lbrace x\vert x\in E, p\cdot x \leq p \cdot e_i\rbrace$ be the set of utility-maximizing sets of goods
for agent $i$. Quasi-concavity of $u_i(\cdot)$ implies that $A_i(p)$ is convex, $A_i(\cdot)$ cannot be nonempty.
It suffices to show that $A_i(\cdot)$ is a continuous function for each agent $i$.

Let $p_1, p_2...$ and $x_1,x_2,...$ be any sequences of prices and demand responses converting to $p$ and $x$, respectively,
with $x_j\in A_i(p_j)$ for all $j\in\mathbb{Z}_+$.  Let $r_j=p_j\cdot x_j$ for each $j$.  Naturally, the sequence $r_1,r_2,...$ must
converge to $r=p\cdot x$, and the sequence $u_i(x_1),u_i(x_2)...$ must converge to $u_i(x)$.

Consider a sequence $\lbrace x_j^\prime\rbrace$ converging to $x^\prime$ with $p_j\cdot x_j^\prime=r_j$
and $x^\prime\in A_i(p)$.  It must be the case that $u_i(x_j)\geq u_i(x_j^\prime)$.  Because $\lbrace u_i(x_j^\prime)\rbrace$
converges to $u_i(x^\prime)$, we must have that $u_i(x)\geq u_i(x^\prime)$, so $x\in A_i(p)$ and thus $A_i(\cdot)$ must be continuous.  

If some agent is a CFMM defined by a supply function (instead of an agent maximizing a utility function), 
it suffices that the CFMM's action space, implicitly defined by the supply function, satisfies the same conditions.
These must hold for any CFMM Supply function satisfying Assumption ~\ref{ass:functionalform}.

It follows from Kakutani's fixed point theorem that there must exist a fixed point of this game; that is, there exists a $p$ and an $x_i$ for each $i$
such that $x_i\in A_i(p)$ and $(\Sigma_i x_i)_{\asset{A}} \leq 0$, with the inequality tight if $p_{\asset{A}}>0$ for all assets $\asset{A}$.

By Assumption \ref{ass:nonsatiated}, if $p_{\asset{A}}=0$ for some asset $\asset{A}$, then there exists an agent $i$ for which every $x_i\in A_i(p)$ has
$(x_i)_{\asset{A}} = \hat{e}_{\asset{A}}$.  
But then the demand for $\asset{A}$ must exceed the available supply, so $(\Sigma_i x_i)_{\asset{A}} >0$, a contradiction.
Analogously, it must be the case that for every asset $\asset{A}$ and every agent $i$, it must be the case that 
$(x_i)_{\asset{A}} < \hat{e}_{\asset{A}}$, and thus (by parts 2 and 3 of Assumption \ref{ass:utility}) 
$x_i$ maximizes $u_i(\cdot)$ subject to $x_i\cdot p\leq e_i$, $x_i\geq 0$ (i.e. without the restriction that $x_i\in E$).

\end{proof}

\section{Proof of Theorem~\ref{thm:rationality}}

\begin{proof}
Note that it suffices without loss of generality to consider functions only of the forms outlined in Assumption \ref{ass:functionalform}.
For each continuous $S_i(\cdot)$,
at an optimal point $(\tilde{p}, \tilde{y})$, for every CFMM $i$, it must be the case that 
$\tilde{\price{A}}_i S_i(\tilde{\price{A}}_i / \tilde{\price{B}}_i) = \tilde{y}_i$.
Then (because the expression $\price{A}_i S_i(p_{A_i}/p_{B_i})$ is piecewise linear, by assumption) there are
two linear functions $q_i^+(\cdot)$, $q_i^-(\cdot)$ such that,
on an open neighborhood about $\tilde{y}$,
$y_i = q_i^+(p_{A_i}, p_{B_i})$ when $\frac{\price{A}_i}{\price{B}_i} \geq \frac{\tilde{\price{A}_i}}{\tilde{\price{B}_i}}$ and 
$y_i = q_i^-(p_{A_i}, p_{B_i})$ otherwise.

We construct an alternate program with only linear constraints but the same optimal solution as the original program.
To the set of existing constraints in the convex program, add the constraints that
$y_i=q_i^+(\price{A}_i, \price{B}_i)$ and $y_i=q_i^-(\price{A}_i, \price{B}_i)$.
For each $S_i(\cdot)$ representing a threshold function of size $\hat S_i$ at exchange rate $r_i$,
if $\tilde{y}_i>0$, add the constraint 
$\price{A}_i \geq r_i \price{B}_i$,
and if $\tilde{y}_i < \tilde{\price{A}}_i \hat S_i$, add the constraint that $\price{A}_i \leq r_i \price{B}_i$.
Additionally, if $\tilde{\price{A}}_i > r_i \tilde{\price{B}}_i$, then add the constraint that 
$y_i =\price{A}_i \hat S_i$, and if $\tilde{\price{A}}_i < r_i \tilde{\price{B}}_i$,
add the constraint $y_i=0$.


This system of constraints is clearly satisfiable since $\tilde p, \tilde y$ is a solution, and every point satisfying the constraints is a market equilibrium (every point satisfies
$y_i \in \price{A}_i S_i(\price{A}_i / \price{B}_i)$).
Each of the constraints is linear and rational, so these constraints define a rational polytope.  The extremal points of this polytope must therefore
be rational.
\end{proof}
\section{Proofs of Examples in \S \ref{sec:cvx}}
\label{app:examples_rational}
\subsection{Proof of Example~\ref{ex:densityfns}}
\begin{proof}
Without loss of generality, assume batch price $r$ (units of $\asset{B}$ per $\asset{A}$) is greater than the CFMM's pre-batch spot price 
of $x_{\asset{B}}/x_{\asset{A}}$, so the CFMM in net sells $\asset{A}$ to the market and purchases $\asset{B}$. The CFMM, under Trading Rule U, makes a trade so that its reserves $(z_{\asset{A}}, z_{\asset{B}})$ after trading satisfy the following two conditions.  

First, the post-batch spot price of the CFMM, $z_{\asset{B}}/z_{\asset{A}}$, must be $r$.  
And second, the CFMM must trade at the batch exchange rate, so
$(x_{\asset{A}} - z_{\asset{A}})r = (z_{\asset{B}}-x_{\asset{B}})$. Note that $S(r)=x_{\asset{A}}-z_{\asset{A}}$.

Combining these equations gives
\begin{equation*}
\frac{x_{\asset{B}} + rS(r)}{x_{\asset{A}} - S(r)}=r.
\end{equation*}

Solving for $S(r)$ gives
\begin{equation*}
S(r) = \frac{rx_{\asset{A}} - x_{\asset{B}}}{2r}. \qedhere
\end{equation*}
\end{proof}

\subsection{Proof of Example~\ref{ex:lmsr_irrational}}

\begin{proof}

Consider a batch instance trading assets $A$ and $B$ that contains one CFMM and one limit sell order.
The CFMM uses trading function $f(x)=-(e^{-x_A}+e^{-x_B})+2$, with initial state $x_{A0}=x_{B0}=1$.  The limit sell order
is to sell 100 units of $A$ for $B$, with a minimum price of $\frac{1}{2}$ $B$ per $A$.

If the batchprice $p=p_{A}/p_B$ is strictly greater than $\frac{1}{2}$, then the limit sell order must sell the entirety of its $A$
to receive at least $100p > 50$ units of $B$, which the CFMM cannot provide.

On the other hand, if the batch price is less than $\frac{1}{2}$, then the limit sell order will not sell any $A$ but the CFMM
demands a nonzero amount of $A$.
Thus, at equilibrium, the batch price equals $\frac{1}{2}$.

Let the limit sell order sell $\tau$ units of $A$.  Clearly, in equilibrium, $0\leq \tau \leq 100$.

Furthermore, the spot price of the CFMM at equilibrium must be equal to $\frac{1}{2}$.  The spot exchange rate of this CFMM
is
\begin{equation*}
\frac{\frac{\partial f}{\partial x_A}}{\frac{\partial f}{\partial x_B}} 
= \frac{e^{-x_A}}{e^{-x_B}} = e^{-x_A + x_B}
\end{equation*}
Thus, at equilibrium, we must have that

\begin{equation*}
p = e^{-(x_{A0} + \tau) + (x_{B0} - p\tau)} = e^{-x_{A0} + x_{B0}}e^{-\tau-p\tau} = e^{-\tau(1+p)}.
\end{equation*}

This gives
 $e^{-\frac{3\tau}{2}}  = \frac{1}{2}.$
That is, $\tau=\frac{2}{3} \ln(2)$, which is irrational.
\end{proof}

\end{document}